\definecolor{darkgreen}{rgb}{0 0.6 0.6}
\definecolor{darkred}{rgb}{0.7 0 0}
\definecolor{darkblue}{rgb}{0 0 0.5}
\newcommand{\tr}{\mathsf{T}}
\newcommand{\E}{\textrm{E}}
\newcommand{\refeq}[1]{(\ref{#1})}
\renewcommand{\th}{^\textrm{th}}
\newcommand{\refsec}[1]{Section~\ref{#1}}
\newcommand{\reffig}[1]{Figure~\ref{#1}}
\newcommand{\refalg}[1]{Algorithm~\ref{#1}}
\newcommand{\reftab}[1]{Table~\ref{#1}}
\newcommand{\mb}[1]{\textbf{#1}}
\newcounter{td}
\newcommand{\note}[2]{%
\refstepcounter{td}%
\label{cm:#1}%
\todo[linecolor=darkred,backgroundcolor=white,bordercolor=white]{%
\textcolor{darkred}{\textbf{\footnotesize C\thetd}}%
}%
\textcolor{darkred}{\textbf{#2}}%
}
\newcommand{\coverletter}{\input{CoverLetter-R1}}
 \renewcommand{\note}[2]{#2}
 \renewcommand{\st}[1]{}
 \renewcommand{\coverletter}{}
\newlength{\histwidth}
\newcommand{\ih}[1]{\includegraphics[width=\histwidth]{Histograms/#1}}
\begin{document}
\coverletter

\title{A Geometric Approach to Sound Source Localization from Time-Delay Estimates}
\author{Xavier Alameda-Pineda and Radu Horaud
\thanks{INRIA Grenoble Rh\^one-Alpes and  Universit\'e de Grenoble.}\thanks{This work was supported by the EU project
HUMAVIPS FP7-ICT-2009-247525.}}
\maketitle

\begin{abstract}
This paper addresses the problem of sound-source localization from time-delay estimates using arbitrarily-shaped
non-coplanar microphone arrays. A novel geometric formulation is proposed, together with a thorough algebraic analysis and a global optimization solver. The proposed model is thoroughly described and evaluated.
The geometric analysis, stemming from the direct acoustic propagation model, leads to necessary and sufficient conditions for a
set of time delays to correspond to a unique position in the source space. Such sets of time delays are referred to as {\em feasible sets}.
We formally prove that every feasible set corresponds to exactly one position in the source space, whose value can be
recovered using a closed-form localization mapping. Therefore we seek for the optimal feasible set of time delays given, as input, the
received microphone signals. This time delay estimation problem is naturally cast into a programming task, constrained by the
feasibility conditions derived from the geometric analysis. A global branch-and-bound optimization technique is proposed
to solve the problem at hand, hence estimating the best set of feasible time delays and, subsequently, localizing the
sound source. Extensive experiments with both simulated and real data are reported; we compare our methodology to four
state-of-the-art techniques. This comparison \note{r1:4}{\st{clearly}} shows that the proposed method combined with the
branch-and-bound algorithm outperforms existing methods. These in-depth geometric understanding, practical algorithms,
and encouraging results, open several opportunities for future work.
\end{abstract}

\section{Introduction}
\label{sec:intro}
For the past decades, source localization has been a fruitful research topic. Sound source localization (SSL) in
particular, has become an important application, because many speech, voice and event recognition systems
assume the knowledge of the sound source position. Time delay estimation (TDE) has proven to be a high-performance
methodological framework for SSL, especially when it is combined with training \cite{Deleforge12a}, statistics
\cite{So2008} or geometry \cite{Canclini13,Alameda-EUSIPCO-12}. We are interested in the development of a
general-purpose TDE-based method for SSL, i.e., TDE-SSL, and we are particularly interested in indoor environments. This is
extremely challenging for several reasons: (i) there may be several sound sources and their number varies over time,
(ii) regular rooms are echoic, thus leading to reverberations, and (iii) the microphones are often embedded in devices
(for example: robot heads and smart phones) generating high-level noise.

In this context, we focus on arbitrarily shaped non-coplanar microphone arrays, because of three main reasons. First,
microphone arrays working on real (mobile) platforms may need to accommodate very restrictive design criteria, for
which array geometries that have been traditionally studied, e.g., linear, circular, or spherical, are not well suited. 
We are particularly interested in embedding microphones into a robot head, such as the humanoid robot
NAO\footnote{\texttt{http://www.aldebaran-robotics.com/}} which possesses four microphones in a
tetrahedron-like shape. There are robot design constraints that are not compatible with a particular type of microphone
array. Moreover, solving for the most general non-coplanar microphone configuration opens the door to dynamically
reconfigurable microphone arrays in arbitrary layouts. Such methods have been already studied in the specific case of
spherical arrays \cite{Li2005}. Nevertheless, the most general case is worthwhile to be studied, since non-coplanar
arrays include an extremely wide
range of specific configurations. 

This paper has the following original contributions:
\begin{itemize}
 \item The \textit{geometric analysis} of the microphone array. We are able to characterize those time delays that
correspond to a position in the source space. Such time delays will be called \textit{feasible} and the derived
necessary and sufficient conditions will be called \textit{feasibility conditions}.
 \item A \textit{closed-form solution} for SSL. Indeed, we formally prove that every feasible set corresponds to exactly
one position in the source space. Moreover, a localization mapping is built to recover, unambiguously, the sound source position from
any set of feasible time delays.
 \item A \textit{programming framework} in which the TDE-SSL problem is cast. More precisely, we propose a criterion
for multichannel TDE designed to deal with microphone arrays in general configuration. The feasibility conditions
derived from the geometric analysis constrain the optimization of the criterion, ensuring that the final TDEs
correspond to a position in the source space.
 \item A branch-and-bound\textit{global optimization method} solving the TDE-SSL task. Once the algorithm converges,
the close-form localization mapping is used to recover the sound source position. We state and prove that the
sound source position is unique.
 \item An \textit{extensive set of experiments} benchmarking the proposed technique to the state-of-the-art. Our method
is compared to four existing methods using both simulated and real data.
\end{itemize}

The remaining \note{r1:m:2}{part} of the paper is organized as follows. \refsec{sec:sota} describes the related work.
\refsec{sec:models} briefly summarizes the signal and propagation models.
\refsec{sec:geometry} presents the full geometric analysis, together with the formal proofs. \refsec{sec:tde} casts the
TDE-SSL task into a constrained optimization task. \refsec{sec:b&b_opt} describes the branch-and-bound global
optimization technique. The proposed SSL-TDE method is evaluated and compared to the state-of-the-art in \refsec{sec:validation}.
Finally, conclusions and a discussion for future work are provided in \refsec{sec:conclusions}.

\section{Related Work}
\label{sec:sota}
The task of localizing a sound source from time delay estimates has received a lot of attention in the past; recent
reviews can be found in \cite{Seco2009,Pertila2009,Chen2006}.

One group of approaches (referred to as \textit{bichannel SSL}) requires one pair of microphones. For example
\cite{Liu10,Mandel07,Viste03,Woodruff12}  estimate the azimuth from the interaural time difference. These methods assume
that the sound source is placed in front of the microphones and it lies in a horizontal plane. Consequently, they are
intrinsically limited to one-dimensional localization. Other methods either guess both the azimuth and
elevation \cite{Kullaib09,Deleforge12b} or track them \cite{Keyrouz2006,Keyrouz2007}. These methods are based on
estimating the impulse response function, which is a combination of the head related transfer function (HRTF) and the
room impulse response (RIR). In order to guarantee the adaptability of the system, the intrinsic properties of the
recording device encompassed in the HRTF must be estimated separately from the acoustic properties of the environment,
modeled by the RIR. Furthermore, these methods lead to localization techniques which do not yield closed form
expressions, thus increasing the computational complexity. Moreover, the dependency on both HRTF and RIR of the
source position is extremely complex, hence, it is difficult to model or to estimate this dependency. In conclusion, these
methods suffer from two main drawbacks. On one side, large training sets and complex learning procedures are required.
On the other side, the estimated parameters correspond to one particular combination of environment and microphone-pair
position and orientation. Since estimating the parameters for all such possible combinations is unaffordable, these
methods are hardly adaptable to unknown environments.

A second group of methods (referred to as \textit{multilateration}) performs SSL from TDE with more than two
microphones, by first estimating pairwise time delays, followed by localizing the source from these estimates. We
note that the time delays are estimated independently of the location. In other words, the two
steps are decoupled. Moreover, the TDEs do not incorporate the geometry of the array, that is, the estimates are
computed regardless of the microphones' position. This is problematic because the existence of a source point
consistent with all TDEs is not guaranteed. In order to illustrate this potential conflict, we consider
a three-microphone linear array. Let $t_m$ be the time-of-arrival at the $m\th$ microphone, and $t_{m,n}=t_n-t_m$ be
the time delay associated to the microphone pair $(m,n)$. In the particular set up of a three-microphone linear array,
the case $t_{1,2}>0$ and $t_{3,2}>0$ is not physically possible. Indeed, this is equivalent to say that the acoustic
wave reaches the first and the third microphones \textit{before} reaching the middle one, which is inconsistent with
the propagation path of the acoustic wave. In order to overcome this issue, multilateration is formulated either as
maximum likelihood (ML) \cite{Chen2003,Sheng2005,So2008,Urruela2004,Strobel1999,Zhang2007,Zhang2008}, as least squares
(LS) \cite{Smith1987,Brandstein1997,Brandstein1997a,Friedlander1987,Huang2001,Canclini13} or as global coherence fields
(CFG) \cite{Omologo-MLMI-2006,Mori-AcademicPress-1997,Brutti-Interspeech-2005,Brutti-Interspeech-2006}. Multilateration
methods posses the advantage of being able to evaluate different TDE and SSL techniques. This allows for a better
understanding of the interactions between TDE and SSL. Unfortunately, even if the ML/LS/GCF frameworks are able to
discard TDE outliers, they can neither prevent nor reduce their occurrence. Consequently, the performance of these
methods drops dramatically when used in highly reverberant environments.

A third group of methods (referred to as \textit{multichannel SSL}) estimates all time delays at once, thus
ensuring their mutual consistency. Multichannel SSL can be further split into two sub-groups. The first sub-group
performs SSL using the TDEs extracted from the acoustic impulse responses
\cite{Doclo2003,Salvati13,Huang03,Lim13,Nakamura-IROS-2011}\note{r2:4}{}.
These responses are directly estimated from the raw data, which is very challenging. As with bichannel SSL, large
training sets and complex learning procedures are necessary. Moreover, the estimated impulse responses correspond to the
acoustic signature of the environment associated with one particular microphone-array position and orientation.
Therefore, such methods suffer from low adaptability to a changing environment. The second sub-group exploits the
redundancy among the received signals. In \cite{Chen2003a} a multichannel criterion based on cross-correlation is
proposed. Even if the method is based on pair-wise cross-correlation functions, the estimation of the time delays is
performed at once. \cite{Chen2003a} has been extended  using temporal prediction \cite{He13a} and has also proven to be
equivalent to two information-theoretic criteria \cite{He13b,Benesty07}, under some statistical assumptions.  However,
all these methods were specifically designed for \textit{linear} microphone arrays. Indeed, the line geometry is directly
embedded in the proposed criterion and in the associated algorithms. Likewise, some methods were designed for other
array geometries, such as \textit{circular} \cite{Pavlidi13} or \textit{spherical}
\cite{Sasaki12,Rafaely-Springer-2010,Rafaely-TASL-2005,Sun-ICASSP-2008} arrays. Again, the geometry is directly embedded
in the methods in both cases. Hence, all these methods cannot be generalized to microphone arrays owing a general
geometric configuration.

Recently, we addressed multichannel TDE-SSL in the case of 
arbitrary arrays, thus guaranteeing the system's adaptability \cite{Alameda-EUSIPCO-12}. TDE-SSL was modelled
as a non-linear programming task, for which a gradient-based local optimization technique was proposed. However, this
method has several drawbacks. First, the geometric analysis is incomplete. Indeed, the reported model is not valid
for arrays with more than four microphones, thus limiting its generality. Second, the local optimization algorithm needed
to be initialized on a grid. Consequently, the resulting procedure is prohibitively slow. Third, the evaluation was
carried out in scenarios with almost no reverberations and only on simulated data. Last, no complexity analysis was
performed.

Unlike most of the existing approaches on multichannel TDE, we did not embed the geometry of the array in the criterion.
Instead, the geometry of the array is incorporated as two feasibility constraints. Furthermore, our approach has
several interesting features: (i) \textit{generality}, since it is not designed for a particular array geometry and it
may accommodate several microphones, (ii) \textit{adaptability}, because the method neither constrains nor estimates the
acoustic signature of the environment,  (iii) \textit{intuitiveness}, since the entire approach is built on a simple
signal model and the geometry is derived from the direct-path propagation model, (iv) \textit{soundness}, due to the
thorough mathematical formalism underpinning the approach and (v) \textit{robustness} and \textit{reliability}, as shown
by the extensive experiments and comparisons with state-of-the-art methods on both simulated and real data.

\section{Signal and Propagation Models}
\label{sec:models}

In this Section we describe the sound propagation model and the signal model. While the first one is exploited to
geometrically relate the time delays to the sound source position (see \refsec{sec:geometry}), the second one is used
to derive a multichannel SSL criterion (see \refsec{sec:tde}). We introduce the following notations: the position of the
sound source $\Svect\in\mathbb{R}^N$, the number of microphones $M$, as well as their positions,
$\{\Mvect_m\}_{m=1}^{m=M}\in\mathbb{R}^N$. Let  $x(t)$ be the signal emitted by the source. The signal received at the
$m\th$ microphone writes:
\begin{equation}
 x_m(t) = x(t-t_m) + n_m(t),
 \label{eq:signal_model}
\end{equation}
where $n_m$ is the noise associated with the $m\th$ microphone and $t_m$ is the time-of-arrival from the source to that
microphone.  The microphones' noise signals are assumed to be  zero-mean independent Gaussian
random processes. Throughout the article, constant sound propagation speed, $\nu$, and direct propagation path are
assumed. Hence we write $t_m = \|\Svect-\Mvect_m\|/\nu$. Using this model, the expression for the time delay between the
$m\th$ and the $n\th$ microphones, $t_{m,n}$, is expressed as:
\begin{equation}
 t_{m,n} = t_n-t_m = \frac{\|\Svect-\Mvect_n\|-\|\Svect-\Mvect_m\|}{\nu}.
\label{eq:tde_model}
\end{equation}

\begin{figure}
\centering
\ifCLASSOPTIONonecolumn \includegraphics[width=0.35\linewidth]{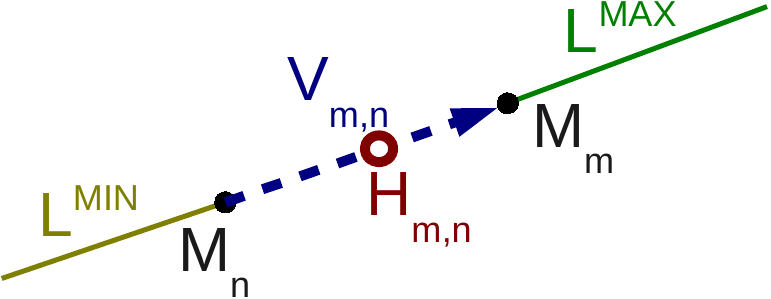}
\else \includegraphics[width=0.7\linewidth]{images/geometry.pdf}
\fi
\caption{The geometry associated with the two microphones case, located at $\Mvect_m$ and $\Mvect_n$ (see
Lemma~\ref{lem:hyperb}). $H_{m,n}$ is the mid-point of the microphones (in red) and $V_{m,n}$  designates the vector
$\Mvect_m-\Mvect_n$ (in dashed-blue). $L_{m,n}^\textrm{MAX}$ and $L_{m,n}^\textrm{MIN}$ are the two half lines
drawn in green and yellow respectively.}
\label{fig:two-microphones}
\end{figure}

\section{Geometric Sound Source Localization}
\label{sec:geometry}

We recall that the task is to localize the sound source from the TDE. In this Section we state the main theoretical
results. Firstly, we describe under which conditions a set of time delays correspond to a sound source position -- when
a sound source can be localized. Such sets will be called \textit{feasible} and the conditions, \textit{feasibility
constraints}. Secondly, we prove the uniqueness of the sound source positions for any feasible time delay set. Finally,
we provide a closed-formula for sound source localization from any feasible set of time delays. Even if, in practice
the problem is set in the source space, $\mathbb{R}^3$, the theory presented here is valid in
$\mathbb{R}^N$,$\,N\geq2$. In the following, \refsec{sec:2-mic} describes the geometry of the problem for the
two microphones case, and \refsec{sec:m-mic} delineates the geometry \note{r1:m:3}{associated} to the $M$-microphone
case in general
position.

\subsection{The Two-Microphone Case}
\label{sec:2-mic}
We start by characterizing the locus of sound-source locations corresponding to a particular time
delay estimate $\hat{t}_{m,n}$, namely $\Svect$  satisfying $t_{m,n}(\Svect)=\hat{t}_{m,n}$. Since~\refeq{eq:tde_model}
defines a hyperboloid in $\mathbb{R}^N$, this equation embeds
the \textit{hyperbolic geometry} of the problem. For completeness, we state the following lemma:
\begin{lemma}
The space of sound-source locations $\Svect\in\mathbb{R}^N$ satisfying
$t_{m,n}(\Svect)=\hat{t}_{m,n}$ is:%\vspace{-0.2cm}
\renewcommand{\theenumi}{(\roman{enumi}}
\begin{enumerate}
\item the empty set if $|\hat{t}_{m,n}|>t_{m,n}^*$, where $t_{m,n}^*=\|\Mvect_m-\Mvect_n\|/\nu$;
\item the half line $\Lvect_{m,n}^\textrm{MAX}$ (or $\Lvect_{m,n}^\textrm{MIN}$), if $\hat{t}_{m,n}=t^*_{m,n}$
 (or if $\hat{t}_{m,n}=-t^*_{m,n}$), where 
\[\Lvect_{m,n}^\textrm{MAX} = \{\Hvect_{m,n} + \mu\Vvect_{m,n}\}_{\mu\geq 1/2},\]
\[\Lvect_{m,n}^\textrm{MIN} = \{\Hvect_{m,n} - \mu\Vvect_{m,n}\}_{\mu\geq 1/2},\]
% $\mu\geq 1/2$,
$\Hvect_{m,n}=(\Mvect_m+\Mvect_n)/2$ is the microphones' middle point and $\Vvect_{m,n}=\Mvect_m-\Mvect_n$ is the
microphones' vectorial baseline (see \reffig{fig:two-microphones});
\item the hyperplane passing through $\Hvect_{m,n}$ and perpendicular to $\Vvect_{m,n}$, if $\hat{t}_{m,n}=0$; or
\item one sheet of a two-sheet hyperboloid with foci $\Mvect_m$ and $\Mvect_n$ for other values of $\hat{t}_{m,n}$.
\vspace{-0.3cm}
\end{enumerate}%
\renewcommand{\theenumi}{\arabic{enumi}}%
\label{lem:hyperb}%
\end{lemma}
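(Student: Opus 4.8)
The plan is to use \refeq{eq:tde_model} to rewrite the defining equation of the locus, $t_{m,n}(\Svect)=\hat t_{m,n}$, as the signed two-distance relation
\[ \|\Svect-\Mvect_n\|-\|\Svect-\Mvect_m\| = d,\qquad d := \nu\hat t_{m,n}, \]
with $2c := \|\Mvect_m-\Mvect_n\| = \nu t^*_{m,n}$, and then to dispatch the three boundary regimes $|d|>2c$, $|d|=2c$, $d=0$ by elementary metric geometry, leaving only the generic regime $0<|d|<2c$ for a conic-normal-form computation. Throughout, write $r_m=\|\Svect-\Mvect_m\|$ and $r_n=\|\Svect-\Mvect_n\|$.

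For (i), the reverse triangle inequality gives $|r_n-r_m|\le\|\Mvect_m-\Mvect_n\|=2c$ for every $\Svect\in\mathbb{R}^N$, so the locus is empty whenever $|d|>2c$. For (ii), if $d=2c$ then $r_n-r_m=2c$ forces equality in $r_n=\|(\Svect-\Mvect_m)+(\Mvect_m-\Mvect_n)\|\le r_m+2c$; equality in the triangle inequality holds precisely when $\Svect-\Mvect_m$ is a non-negative multiple of $\Mvect_m-\Mvect_n=\Vvect_{m,n}$, that is $\Svect=\Mvect_m+\lambda\Vvect_{m,n}$ with $\lambda\ge0$, and since $\Mvect_m=\Hvect_{m,n}+\tfrac12\Vvect_{m,n}$ this is exactly $\Svect=\Hvect_{m,n}+(\lambda+\tfrac12)\Vvect_{m,n}$ with $\lambda+\tfrac12\ge\tfrac12$, i.e. $\Svect\in\Lvect_{m,n}^\textrm{MAX}$; the reverse inclusion is a direct check, and the case $d=-2c$ is symmetric and yields $\Lvect_{m,n}^\textrm{MIN}$. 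For (iii), $d=0$ reads $r_m=r_n$, the locus of points equidistant from the two foci, which is the hyperplane through $\Hvect_{m,n}=(\Mvect_m+\Mvect_n)/2$ with normal $\Vvect_{m,n}$.

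For the generic case (iv), $0<|d|<2c$, I would translate the origin to $\Hvect_{m,n}$ and choose an orthonormal basis of $\mathbb{R}^N$ whose first vector points along $\Vvect_{m,n}$, so that $\Mvect_m=(c,0,\dots,0)$, $\Mvect_n=(-c,0,\dots,0)$ and $\Svect=(u_1,\dots,u_N)$. The key identity, valid in these centered coordinates, is $r_n^2-r_m^2=2\,\Svect\cdot(\Mvect_m-\Mvect_n)=4cu_1$; together with $r_n-r_m=d$ it gives $r_n+r_m=4cu_1/d$, which is strictly positive because $\Svect$ cannot coincide with a focus when $|d|<2c$, hence $\mathrm{sign}(u_1)=\mathrm{sign}(d)$, and moreover lets one express $r_m$ and $r_n$ as affine functions of $u_1$. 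Substituting $r_m^2=\|\Svect-\Mvect_m\|^2$ and carrying out the single remaining squaring --- which is reversible on the half-space $\mathrm{sign}(u_1)=\mathrm{sign}(d)$, where $r_m\ge0$ automatically --- yields the canonical equation
\[ \frac{u_1^2}{a^2}-\frac{u_2^2+\cdots+u_N^2}{b^2}=1,\qquad a=\tfrac{|d|}{2},\quad b^2=c^2-a^2>0, \]
which, translated back, describes one sheet --- the one lying on the side of $\Hvect_{m,n}$ selected by $\mathrm{sign}(\hat t_{m,n})$ --- of a two-sheet hyperboloid with foci $\Mvect_m$ and $\Mvect_n$.

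I expect the only delicate point to be case (iv): the bookkeeping of the radical elimination and, above all, showing that the squaring introduces no extraneous branch. This is exactly where the identity $r_n^2-r_m^2=4cu_1$ does the work, pinning $\mathrm{sign}(r_n-r_m)$ to $\mathrm{sign}(u_1)$ so that precisely one of the two sheets is retained. Cases (i)--(iii) are immediate from the triangle inequality and its equality case and pose no obstacle.
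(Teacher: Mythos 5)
Your proposal is correct, and its overall strategy matches the paper's: the triangle inequality for (i), and, for the generic case, squaring the defining relation to land on a quadric and then a sign argument to discard the spurious sheet. Where you genuinely diverge is in the local arguments. For (ii) you invoke the equality case of the triangle inequality, which identifies the ray $\{\Mvect_m+\lambda\Vvect_{m,n}\}_{\lambda\geq0}=\Lvect_{m,n}^{\textrm{MAX}}$ directly and with no calculus, whereas the paper parametrizes $\Svect$ in an orthogonal frame through $\Hvect_{m,n}$ and differentiates with respect to the coefficients; your route is more elementary and arguably tighter. For (iii) you argue directly via equidistance (perpendicular bisector), while the paper reads it off from the squared equation at $a=0$; both are fine. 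For (iv) the paper squares, recognizes a two-sheet hyperboloid from the signs of the coefficients, splits it into sheets $\mathcal{S}^{\pm}_{m,n}$ via the two roots in $x_1$, and disambiguates through the antisymmetry $t_{m,n}(\mathcal{S}^{+}_{m,n})=-t_{m,n}(\mathcal{S}^{-}_{m,n})$, stated as an easy verification; you instead center coordinates at $\Hvect_{m,n}$, use the focal identity $r_n^2-r_m^2=4cu_1$ to get $r_n+r_m=4cu_1/d$ and hence $\mathrm{sign}(u_1)=\mathrm{sign}(d)$, and express $r_m,r_n$ as affine functions of $u_1$ so that the single squaring is reversible on the selected half-space. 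This makes the one-sheet selection explicit and self-contained (and delivers the canonical form with $a=|d|/2$, $b^2=c^2-a^2$), at the cost of slightly more coordinate bookkeeping; the paper's antisymmetry observation is shorter and is reused later for the $M$-microphone disambiguation, which is a structural advantage of their phrasing. Two trivial quibbles: positivity of $r_m+r_n$ needs no appeal to $\Svect$ not being a focus (it is always at least $2c>0$ since the microphones are distinct), and in (ii) it is worth saying explicitly that the endpoint $\lambda=0$, i.e.\ $\Svect=\Mvect_m$, is included, matching $\mu=1/2$ in $\Lvect_{m,n}^{\textrm{MAX}}$.
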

\begin{proof} Using the triangular inequality, it is easy to see $-t^*_{m,n}\leq t_{m,n}(\Svect) \leq t^*_{m,n}$,
$\forall\Svect\in\mathbb{R}^N$, which proves (i). (ii) is proven by rewriting $\Svect= \Hvect_{m,n} + \mu_1\Vvect_{m,n}
+ \sum_{k=2}^N\mu_k \Wvect_k$, where $(\Vvect_{m,n},\Wvect_2,\ldots,\Wvect_N)$ is an orthogonal basis of $\mathbb{R}^
N$, and then taking the derivatives with respect to the $\mu_i$'s. In order to prove (iii) and (iv) and without loss of
generality, we can assume $\Mvect_m=\evect_1$, $\Mvect_n=-\evect_1$ and $\nu=1$, where $\evect_1$ is the first element
of the canonical basis of $\mathbb{R}^N$. Hence, $t^*_{m,n}=4$. Equation \refeq{eq:tde_model} rewrites:
\begin{equation}
(\hat{t}_{m,n})^2 + 4x_1 = -2\hat{t}_{m,n}\left((x_1+1)^2+\sum_{k=2}^N x_k^2\right)^{\frac{1}{2}},
 \label{eq:tde_particular}
\end{equation}
where $(x_1,\ldots,x_N)^\tr$ are the coordinates of $\Svect$. By squaring the previous equation we obtain:
\begin{equation}
 a(4-a) + 4a\sum_{k=2}^N x_k^2 - 4(4-a)x_1^2 = 0,
\label{eq:normalized_hyperboloid}
\end{equation}
where $a=\left(\hat{t}_{m,n}\right)^2$. Notice that if $\hat{t}_{m,n}=0$, \refeq{eq:normalized_hyperboloid} is
equivalent to $x_1=0$, which corresponds to the statement in (iii). For the rest of values of $a$, that is
$0<a<(t_{m,n}^*)^2=4$, equation \refeq{eq:normalized_hyperboloid} represents a two-sheet hyperboloid because all the
coefficients are strictly positive except the coefficient of $x_1^2$, which is strictly negative. In addition, we can
rewrite \refeq{eq:normalized_hyperboloid} as:
\begin{equation}
 x_1^2 = \frac{a(4-a) + 4a\sum_{k=2}^N x_k^2}{4(4-a)}
\label{eq:x_1}
\end{equation}
and notice that $x_1^2 >0$. We observe that the solution space of~\refeq{eq:normalized_hyperboloid} can be split into
two subspaces
$\mathcal{S}^+_{m,n}$ and $\mathcal{S}^-_{m,n}$ parametrized by $(x_2,\ldots,x_N)$, corresponding to the two solutions
of~\refeq{eq:x_1}. These two subspaces are the two sheets of the hyperboloid defined
in~\refeq{eq:normalized_hyperboloid}. Moreover, one can easily verify that $t_{m,n}\left(\mathcal{S}_{m,n}^+\right)
= -t_{m,n}\left(\mathcal{S}_{m,n}^-\right)$, so either $t_{m,n}\left(\mathcal{S}_{m,n}^+\right) = \hat{t}_{m,n}$ or
$t_{m,n}\left(\mathcal{S}_{m,n}^-\right) = \hat{t}_{m,n}$, but both equalities cannot hold simultaneously. Hence the set
of points $\Svect$ satisfying $t_{m,n}(\Svect)=\hat{t}_{m,n}$ is either $\mathcal{S}_{m,n}^+$ or $\mathcal{S}_{m,n}^-$:
one sheet of a two-sheet hyperboloid.
\end{proof}

We remark that the solutions of \refeq{eq:normalized_hyperboloid} are $\mathcal{S}_{m,n}^+\cup\mathcal{S}_{m,n}^-$.
However, the solutions of \refeq{eq:tde_particular} are either $\mathcal{S}_{m,n}^+$ or $\mathcal{S}_{m,n}^-$.
This occurs because \refeq{eq:normalized_hyperboloid} depends only on $a=(\hat{t}_{m,n})^2$, and not on
$\hat{t}_{m,n}$. Consequently, changing the sign of $\hat{t}_{m,n}$ does not modify the solutions of
\refeq{eq:normalized_hyperboloid}. In other words, the solutions of \refeq{eq:normalized_hyperboloid} contain not only
the \textit{genuine} solutions (those of \refeq{eq:tde_particular}), but also a set of \textit{artifact} solutions.
More precisely, this set corresponds to the solutions of \refeq{eq:tde_particular}, replacing $\hat{t}_{m,n}$ by
$-\hat{t}_{m,n}$. Geometrically, the solutions of \refeq{eq:tde_particular} are one sheet of a two-sheet hyperboloid,
and the solutions of \refeq{eq:normalized_hyperboloid} are the entire hyperboloid. Notice that, because $t_{m,n}({\cal
S}_{m,n}^+)=-t_{m,n}({\cal S}_{m,n}^-)$, we are always able to disambiguate the \textit{genuine} solutions from the
\textit{artifact} ones.

%In this Section, we described the consequences of equation \refeq{eq:tde_model} on $\Svect$ for fixed values of $m$ and
% of $n$. Naturally, the next step is to consider all equations of the type in \refeq{eq:tde_model} at once, delineating
% the geometry of the most general set up and leading to our sound source localization theory.

\subsection{The Case of $M$ Microphones in General Position}
\label{sec:m-mic}
We now consider the case of $M$ microphones in \textit{general position}, i.e., the microphones do not lie in a
hyperplane of $\mathbb{R^N}$. Firstly, we remark that, if a set of time delays
$\hat{\tvect}=\{\hat{t}_{m,n}\}_{m=1,n=1}^{m=M,n=M}\subset\mathbb{R}^{M^2}$ satisfies \refeq{eq:tde_model} $\forall
m,n$, then these time delays also satisfy the following constraints:
\begin{eqnarray*}
\hat{t}_{m,m} = 0 & & \forall m,\\
\hat{t}_{m,n} = -\hat{t}_{n,m} & & \forall m,n, \\
\hat{t}_{m,n} = \hat{t}_{m,k} + \hat{t}_{k,n} & & \forall m,n,k.
\end{eqnarray*}
As a consequence of these three equations we can rewrite any $\hat{t}_{m,n}$ in terms of
$(\hat{t}_{1,2},\ldots,\hat{t}_{1,M})$:
\begin{equation} \hat{t}_{m,n} = -\hat{t}_{1,m} + \hat{t}_{1,n} \qquad \forall m,n. \end{equation} 

\begin{figure}
\centering
\ifCLASSOPTIONtwocolumn\includegraphics[width=0.8\linewidth]{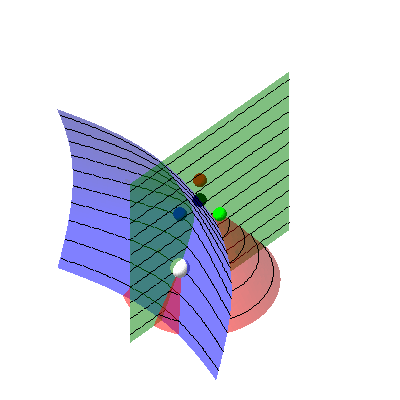}
\else\includegraphics[width=0.4\linewidth]{images/hyperboloidlocalization.png}
\fi
\caption{Localization of the source using four microphones. Their position is shown in black ($\Mvect_1$), blue
($\Mvect_2$), red ($\Mvect_3$) and green ($\Mvect_4$). The blue hyperboloid corresponds to $\hat{t}_{1,2}$, the red to
$\hat{t}_{1,3}$ and the green to $\hat{t}_{1,4}$. The intersection of the hyperboloids corresponds to the sound source
position (white marker).}
\label{fig:hyperboloid_localization}
\end{figure}

This can be written as a vector $\hat{\tvect} = (\hat{t}_{1,2} \ldots \hat{t}_{1,M})\tp$ that lies in an
$(M-1)$-dimensional vector subspace $\mathcal{W}\subset\mathbb{R}^{M^2}$. In other words, there are only $M-1$
linearly independent equations of the form~\refeq{eq:tde_model}. We remark that these $M-1$ linearly independent
equations are still coupled by the sound source position $\Svect$. Geometrically, this is equivalent to seek the
intersection of $M-1$ hyperboloids in $\mathbb{R}^N$ (see \reffig{fig:hyperboloid_localization}). Algebraically, this is
equivalent to solve a system of $M-1$ non-linear equations in $N$ unknowns. In general, this leads to finding the
roots of a high-degree polynomial. However, in our case the hyperboloids share one focus, namely $\Mvect_1$. As it will
be shown below, in this case the problem reduces to solving a second-degree polynomial plus a linear system of
equations. The $M-1$ equations \refeq{eq:tde_model} write:
\begin{equation}
 \left\{\begin{array}{ccc}
  \nu \hat{t}_{1,2} &=& \|\Svect-\Mvect_2\|-\|\Svect-\Mvect_1\|\\
  &\vdots&\\
  \nu \hat{t}_{1,M} &=& \|\Svect-\Mvect_M\|-\|\Svect-\Mvect_1\|\\
 \end{array}\right.
 \label{eq:system}
\end{equation}
Because the $M$ microphones are in general position (they do not lie in a hyperplane of $\mathbb{R}^N$),
\note{r1:m:4}{$M\geq N+1$ and the} number of equations is greater or equal than the number of unknowns. 

We now provide the conditions on $\hat{\tvect}$ under which \refeq{eq:system} yields a real and unique solution for
$\Svect$. More precisely, firstly, we provide a necessary condition on $\hat{\tvect}$ for~\refeq{eq:system} to have
real solutions, secondly, we prove the uniqueness of the solution and build a mapping to recover the solution $\Svect$,
and thirdly, we provide a necessary and sufficient condition on $\hat{\tvect}$ for~\refeq{eq:system} to have a real
and unique solution.

Notice that each equation in~\refeq{eq:system} is equivalent to
$(\nu\hat{t}_{1,m} + \|\Svect-\Mvect_1\|)^2 = \|\Svect-\Mvect_m\|^2,$ from which we obtain
$-2(\Mvect_1-\Mvect_m)^\tr\Svect + p_{1,m}\|\Svect-\Mvect_1\|+ q_{1,m} =
0,$ where $p_{1,m} = 2\nu \hat{t}_{1,m}$ and $q_{1,m} = \nu^2(\hat{t}_{1,m})^2+\|\Mvect_1\|^2-\|\Mvect_m\|^2$. Hence,
\refeq{eq:system} can now be written in matrix form:
\begin{equation}
\Mmat\Svect + \Pvect\|\Svect-\Mvect_1\| + \Qvect = 0,
\label{eq:linear_system}
\end{equation}
where $\Mmat\in\mathbb{R}^{(M-1)\times N}$ is a matrix with its $m\th$ row, $1\leq m\leq M-1$, equal to
$(\Mvect_{m+1}-\Mvect_1)^\tr$, $\Pvect = (p_{1,2},\ldots,p_{1,M})^\tr$ and $\Qvect = (q_{1,2},\ldots,q_{1,M})^\tr$.
Notice that $\Pvect$ and $\Qvect$ depend on $\hat{\tvect}$.

Without loss of generality and because the points $\Mvect_1,\ldots,\Mvect_M$ do not lie in the same hyperplane, we
assume that $\Mmat$ can be written as a concatenation of an invertible matrix $\Mmat_L\in\mathbb{R}^{N\times N}$ and a
matrix $\Mmat_E\in\mathbb{R}^{(M-N-1)\times N}$ such that $\Mmat =\left(\begin{array}{c}
\Mmat_L\\\Mmat_E\\\end{array}\right)$. \note{r3:1}{We can easily accomplish this by renumbering the microphones such
that the first $N+1$ microphones do not lie in the same hyperplane. This implies that the first $N$ rows of $\Mmat$ are
linearly independent, and therefore $\Mmat_L$ is invertible.} Similarly we have $\Pvect =
\left(\begin{array}{c}\Pvect_L \\
\Pvect_E\end{array}\right)$ and $\Qvect =\left(\begin{array}{c}\Qvect_L \\ \Qvect_E\end{array}\right)$. Thus,
\refeq{eq:linear_system} rewrites:
\begin{eqnarray}
\Mmat_L\Svect + \Pvect_L\|\Svect-\Mvect_1\| + \Qvect_L &=& 0\label{eq:split-system-loc},\\
\Mmat_E\Svect + \Pvect_E\|\Svect-\Mvect_1\| + \Qvect_E &=& 0\label{eq:split-system-ext},
\end{eqnarray}
where $\Pvect_L$, $\Qvect_L$ are vectors in $\mathbb{R}^N$ and $\Pvect_E$, $\Qvect_E$
 are vectors in $\mathbb{R}^{M-N-1}$. If we also decompose $\hat{\tvect}$ into $\hat{\tvect}_L$ and $\hat{\tvect}_E$, we
observe that $\Pvect_L$ and $\Qvect_L$ depend only on $\hat{\tvect}_L$ and that $\Pvect_E$ and $\Qvect_E$ depend only
on $\hat{\tvect}_E$. Notice that~\refeq{eq:system} is strictly equivalent
to~\refeq{eq:split-system-loc}-\refeq{eq:split-system-ext}. In the following, \refeq{eq:split-system-loc}
will be used for defining the necessary conditions on $\hat{\tvect}$ as well as localizing the sound source. The study
of~\refeq{eq:split-system-ext} is reported further on. 
By introducing a scalar variable
$w$, \refeq{eq:split-system-loc} can be written as:
\begin{eqnarray}
\Mmat_L\Svect + w\Pvect_L + \Qvect_L &=& 0,\label{eq:sub-system-lin}\\
\|\Svect-\Mvect_1\|^2 - w^2 &=& 0.\label{eq:sub-system-qua}
\end{eqnarray}

We remark that the system~\refeq{eq:sub-system-lin}-\refeq{eq:sub-system-qua} is defined in the $(\Svect,w)$
space. Notice that~\refeq{eq:sub-system-lin} a straight line and~\refeq{eq:sub-system-qua} represents a two-sheet
hyperboloid. Because two-sheet hyperboloids are not ruled surfaces, \refeq{eq:sub-system-qua} cannot contain the
straight line in~\refeq{eq:sub-system-lin}. Hence \refeq{eq:sub-system-lin} and \refeq{eq:sub-system-qua} intersect in
two (maybe complex) points.

In order to solve~\refeq{eq:sub-system-lin}-\refeq{eq:sub-system-qua}, we first rewrite~\refeq{eq:sub-system-lin} as
\begin{equation}
\Svect = \Avect w + \Bvect,
\label{eq:S(w)}
\end{equation}
where $\Avect = -\Mmat_L^{-1}\Pvect_L$ and $\Bvect = -\Mmat_L^{-1}\Qvect_L$, and then
substitute $\Svect$ from~\refeq{eq:S(w)} into~\refeq{eq:sub-system-qua} obtaining:
\begin{equation}
(\|\Avect\|^2-1)w^2 + 2\left<\Avect,\Bvect-\Mvect_1\right>w + \|\Bvect-\Mvect_1\|^2 = 0.
\label{eq:w}
\end{equation}

We are interested in the real
solutions, that is, $\Svect\in\mathbb{R}^N$. Because $\Avect,\Bvect\in\mathbb{R}^N$, the
solutions of~\refeq{eq:sub-system-lin}-\refeq{eq:sub-system-qua} are real, if and only if, the solutions
to~\refeq{eq:w} are real too. Equivalently, the discriminant of~\refeq{eq:w} has to be non-negative. Hence the
solutions to~\refeq{eq:sub-system-lin}-\refeq{eq:sub-system-qua} are real if and only if $\hat{\tvect}$
satisfies:
\begin{equation}
 \Delta(\hat{\tvect}) := \left<\Avect,\Bvect-\Mvect_1\right>^2 - \|\Bvect-\Mvect_1\|^2(\|\Avect\|^2-1)\geq0.
 \label{eq:delta}
\end{equation}
The previous equation is a \textit{necessary condition}
for~\refeq{eq:sub-system-lin}-\refeq{eq:sub-system-qua} to have real solutions. Albeit, we are interested in
the solutions of~\refeq{eq:split-system-loc}. Obviously, if $\Svect$ is a solution
of~\refeq{eq:split-system-loc}, then $(\Svect,\|\Svect-\Mvect_1\|)$ is a solution
of~\refeq{eq:sub-system-lin}-\refeq{eq:sub-system-qua}. However, the reciprocal is not true; these two systems
are not equivalent. Indeed, since $\Delta(\hat{\tvect})=\Delta(-\hat{\tvect})$, one of the solutions
of~\refeq{eq:sub-system-lin}-\refeq{eq:sub-system-qua} is the solution of~\refeq{eq:split-system-loc} and
the other is the solution of~\refeq{eq:split-system-loc} replacing $\hat{\tvect}$ by $-\hat{\tvect}$. In other
words, the two solutions of~\refeq{eq:sub-system-lin}-\refeq{eq:sub-system-qua}, namely $(\Svect^+,w^+)$ and
$(\Svect^-,w^-)$, satisfy
\begin{equation*}
\text{ either }
\left\{\begin{array}{l}
\tvect(\Svect^+)=\hat{\tvect}\\
\tvect(\Svect^-)=-\hat{\tvect}
\end{array}\right.\text{ or }
\left\{\begin{array}{l}
\tvect(\Svect^+)=-\hat{\tvect}\\
\tvect(\Svect^-)=\hat{\tvect}        
\end{array}\right..
\end{equation*}
Notice that this situation has been already encountered on equations \refeq{eq:tde_particular} and
\refeq{eq:normalized_hyperboloid}, where the same disambiguation reasoning has been used.
To summarize, the solution to~\refeq{eq:split-system-loc} is \textit{unique}. Moreover, we can use~\refeq{eq:S(w)} to
define the following \textit{localization mapping}, which retrieves the sound-source position from a feasible
$\hat{\tvect}$:
\begin{equation}
L(\hat{\tvect}):=\left\{\begin{array}{ll}
\Svect^+=\Avect w^+ + \Bvect & \text{if } \tvect(\Svect^+)=\hat{\tvect}\\
\Svect^-=\Avect w^- + \Bvect & \text{otherwise}.
\end{array}\right.
\label{eq:L}
\end{equation}
Until now we provided the condition under which
equation~\refeq{eq:split-system-loc} yields real solutions, the uniqueness of the solution and a localization
mapping. However, the original system includes also equation~\refeq{eq:split-system-ext}. In fact,
\refeq{eq:split-system-ext} adds $M-N-1$ constraints onto $\hat{\tvect}$.
Indeed, if $L(\hat{\tvect})$ is a solution of~\refeq{eq:split-system-loc}, then in
order to be a solution of~\refeq{eq:split-system-loc}-\refeq{eq:split-system-ext}, it has to satisfy:
\begin{equation}
{\cal E}(\hat{\tvect}):=\Mmat_E L(\hat{\tvect}) + \Pvect_E\|L(\hat{\tvect})-\Mvect_1\| + \Qvect_E = 0. 
\label{eq:E}
\end{equation}
Moreover, the reciprocal is true. Summarizing, the system~\refeq{eq:split-system-loc}-\refeq{eq:split-system-ext} has
a unique real solution $L(\hat{\tvect})$ if and only if $\Delta(\hat{\tvect})\geq0$ and ${\cal E}(\hat{\tvect})=0$.

It is interesting to discuss these findings from three different perspectives: geometric, algebraic, and computational:
\begin{enumerate}
\item \textit{The differential geometry point of vew.}
The set of feasible time delays, 
\[\mathcal{T} = \left\{ \hat{\tvect}\in\mathcal{W}, \Delta(\hat{\tvect})\geq 0 \text{ and } {\cal
E}(\hat{\tvect})=0\right\}, \]
is a bounded $N$-dimensional manifold with boundary lying in a $(M-1)$-dimensional vector subspace of
$\mathbb{R}^{M^2}$, Indeed, because ${\cal E}$ is a $(M-N-1)$-dimensional vector-valued function, $\mathcal{T}$ has
dimension $N$. The boundary of $\mathcal{T}$ is the set
\[\partial\mathcal{T}=\left\{\hat{\tvect}\in\mathcal{W}\middle|\Delta(\hat{\tvect})\geq0 \text{ and } {\cal
E}(\hat{\tvect})=0 \right\}.\]
In this context, the localization mapping must be seen as a smooth bijection from $\mathcal{T}$ to $\mathbb{R}^N$,
i.e., an isomorphism between the manifolds.
\item \textit{The algebraic point of view.}
 $\Delta$ and ${\cal E}$ characterize the time delays corresponding to a position in the source space,
$\mathbb{R}^N$. That is to say that $\Delta$ and ${\cal E}$ represent the feasibility constraints, the necessary and
sufficient conditions for the existence of $\Svect$. Under this conditions $\Svect$ is unique and given by the
closed-form localization mapping $L$.
\item \textit{The computational point of view.}
The mappings $\Delta (\hat{\tvect})$, ${\cal E}(\hat{\tvect}) $ and $L(\hat{\tvect})$  which are computed from
\refeq{eq:delta}, \refeq{eq:E}, and \refeq{eq:L} are expressed in closed-form and they only depend on the microphone
locations. The most time-consuming part of these computations is the inversion of the \textit{microphone matrix}
$\Mmat_L$, which can be performed off-line. Consequently, the use of these three mappings is intrinsically efficient.
\end{enumerate}

To conclude, we highlight that $\Delta (\hat{\tvect})$ and ${\cal E} (\hat{\tvect})$, i.e., \refeq{eq:delta} and
\refeq{eq:E}, provide the conditions under which the  $M-1$ time delays correspond to a
\textit{valid point} in $\mathbb{R}^N$. \textit{If these conditions are satisfied, the problem yields a unique location
for the source $\Svect$}. Moreover, the mapping $L (\hat{\tvect})$ defined by \refeq{eq:L} is a closed-form sound-source
localization solution
for any set of  \textit{feasible} time delays $\hat{\tvect}$:
%From $\hat{\tvect}$ we compute:
\begin{equation}
 \Svect = L(\hat{\tvect}).
\label{eq:localization}
\end{equation}

\section{Time Delay Estimation}
\label{sec:tde}
In the previous section we described how to characterize the feasible sets of time delays and how to localize a sound
source from them. We now address the problem of how to obtain an optimal set of time delays given the perceived acoustic
signals. In the following, we delineate a criterion for multichannel time delay estimation
(\refsec{sec:mtde_criterion}), which will subsequently be used in \refsec{sec:task} to cast the TDE-SSL problem into a
non-linear multivariate constrained optimization task. Indeed, the multichannel TDE criterion is a non-linear cost
function allowing to choose the best value for $\hat{\tvect}$. The feasibility constraints derived in the previous
section are used to constrain the optimization problem, thus seeking the optimal feasible value for $\hat{\tvect}$.

\subsection{A Criterion for Multichannel TDE}
\label{sec:mtde_criterion}
The criterion proposed in \cite{Chen2003a} is built from the
theory of linear predictors and presented in the framework of \textit{linear} microphone arrays. Following a similar
approach, we propose to generalize this criterion to arrays owing a general microphone configuration. Given the $M$
perceived signals $\{x_m(t)\}_{m=1}^{m=M}$, we would like to estimate the time delays between them. As explained before,
only $M-1$ of the delays are linearly independent. Without loss of generality we choose, as above, the delays
$t_{1,2},\ldots,t_{1,m},\ldots,t_{1,M}$. We select $x_1(t)$ as the reference signal and set the following prediction
error:
\begin{equation}
e_{\cvect,\tvect}(t) = x_1(t) - \sum_{m=2}^M c_{1,m}\,x_m(t+t_{1,m}),
\label{eq:prediction_error}
\end{equation}
where $\cvect=\left(c_{1,2},\ldots,c_{1,m},\ldots,c_{1,M}\right)^\tr$ is the vector of the prediction coefficients and
$\tvect=\left(t_{1,2},\ldots,t_{1,m},\ldots,t_{1,M}\right)^\tr$ is the vector of the prediction time delays. Notice also
that, when $\tvect$ takes the true value, the signals $x_m(t+t_{1,m})$ and $x_n(t+t_{1,n})$ are on phase. The criterion
to minimize is the expected energy of the prediction error \refeq{eq:prediction_error}, leading to an unconstrained
optimization problem:
\begin{equation*}
 (\cvect^*,\tvect^*) = \arg\min_{\cvect,\tvect} \E\left\{e_{\cvect,\tvect}^2(t)\right\}.
\end{equation*}
% where $\E\{\cdot\}$ denotes the expectation. The optimal value for $\cvect$ is
% $\cvect^*(\tvect)=\tilde{\Rmat}^{-1}(\tvect)\rvect(\tvect)$, with:
% \begin{equation*}
%  \tilde{\Rmat}(\tvect) =\left(\begin{array}{ccc}
%  R_{2,2}(0) & R_{2,3}(t_{1,3}-t_{1,2}) & \cdots \vspace{0.1cm}\\
%  R_{2,3}(t_{1,3}-t_{1,2}) & R_{3,3}(0) & \cdots \vspace{0.1cm}\\
%  \vdots & \vdots & \ddots \vspace{0.1cm}\\
%  \end{array}\right)
% \end{equation*}
% and
% \begin{equation*}
%  \rvect(\tvect) =\left(\begin{array}{ccc}
%  R_{1,2}(-t_{1,2}) &  R_{1,3}(-t_{1,3}) & \ldots
%  \end{array}\right)^\tr,
% \end{equation*}
% where we denoted by $R_{i,j}(\tau) = \E\left\{x_i(t)x_j(t-\tau)\right\}$ the cross-correlation functions. We assume
% that
% the direct propagation path is dominant with respect to the reverberant paths. Hence, $R_{i,j}(\tau)$ has its maxima
% at
% the true value of $t_{i,j}$. By setting $\cvect=\cvect^*$, the optimization problem becomes:
% \begin{equation}
%  \tvect^* = \arg\min_{\tvect}\tilde{J}(\cvect^*(\tvect),\tvect) = \arg\min_{\tvect}
% \left\{R_{1,1}(0) - \rvect^\tr(\tvect)\tilde{\Rmat}^{-1}(\tvect)\rvect(\tvect)\right\}.
% \label{eq:init_criterion} 
% \end{equation}
In addition, it can be shown (see \cite{Chen2003a}) that this problem is equivalent to:
\begin{equation}
 \tvect^* = \arg\min_{\tvect}J(\tvect),
\label{eq:criterion}
\end{equation}
with
\begin{equation}
J(\tvect)=\det\left(\Rmat(\tvect)\right),
\label{eq:J}
\end{equation}
$\Rmat(\tvect)\in\mathbb{R}^{M\times M}$ being the real matrix of
normalized cross-correlation functions evaluated
at $\tvect$. That is $\Rmat(\tvect) = \left[\rho_{m,n}(t_{m,n})\right]_{m,n}$ with:
\begin{equation}
\rho_{m,n}(t_{m,n}) = \frac{\E\left\{x_m(t+t_{1,m})x_n(t+t_{1,n})\right\}}{\sqrt{E_m E_n}},
\label{eq:rho}
\end{equation}
where $E_m=R_{m,m}(0)=\E\left\{x_m^2(t)\right\}$ is the energy of the $m\th$ signal.

Importantly, the criterion $J$ in \refeq{eq:J} is designed to deal with microphones in a general configuration and not
for a specific microphone-array geometry. Hence, this guarantees the generality of the proposed approach. Because the
array's geometry is not embedded in $J$ (as it is done in \cite{Chen2003a}), $J$ is a multivariate function. In the next
Section, the feasibility constraints previously derived are combined with this cost function to set up a constrained
multivariate optimization task. 

\subsection{The Constrained Optimization Formulation}
\label{sec:task}
So far we characterized the \textit{feasible} values of $\tvect$, i.e., those corresponding to a sound source position
(\refsec{sec:geometry}) and introduced a criterion to choose
the best value for $\hat{\tvect}$ (\refsec{sec:mtde_criterion}). The next step is to look for the best value among the
feasible ones. This will be referred to as the
\textit{geometrically-constrained time delay estimation} problem, which naturally casts into the following
\textit{non-linear multivariate constrained optimization} problem:
\begin{equation}
\left\{
\begin{array}{l}
 \displaystyle \min_{\tvect} J(\tvect),\vspace{0.2cm}\\
 \textrm{s.t.} \quad \tvect\in{\cal W}\cap{\cal B}, \quad \Delta\left(\tvect\right)\geq0,\quad {\cal
E}\left(\tvect\right)=0,
\end{array}
\right.
\label{eq:const-optim}
\end{equation}
where ${\cal W}$, $\Delta$ and ${\cal E}$ are defined in \refsec{sec:geometry} and ${\cal B}$ is a compact set defined
as:
\begin{equation}
{\cal B} = \left\{\tvect\in\mathbb{R}^{M^2} \middle| |t_{m,n}|\leq t_{m,n}^*, \forall m,n=1,\ldots,M\right\}.
\label{eq:B_set}
\end{equation}

It is worth noticing that, in practice, the dimension of the optimization task is $M-1$. Indeed, since all time delays
can be expressed as a function of $(t_{1,2},\ldots,t_{1,M})$, the optimization is done with respect to these $M-1$
variables. We also remark that the optimization variables lay in a bounded space (as described in
\refsec{sec:2-mic}). The equality constraint is trivial when $M=N+1$. In other words, in a real world scenario, this
constraint does not exist when the array consists on $4=3+1$ microphones. When using five or more microphones, the
condition could be relaxed to $\|{\cal E}(\tvect)\|^2\leq\epsilon$, which is often more adapted to the existing
optimization algorithms. 

\note{r1:1}{We would like to highlight that all $M$ microphone signals are used in the estimation procedure. In a sense,
all received signals affect the estimation of all the time delays. This is why there is one $(M-1)$-dimensional
optimization task and not several one-dimensional optimization tasks. The localization is carried out immediately after
the time delay estimation thanks to a closed-form solution \refeq{eq:localization}, thus with no other estimation
procedure. The power of the proposed method relies on the intrinsic relation between signals and time delay estimates
combined with the use of the geometric constraints given by the microphones position. By adding these constraints to the
estimation procedure, we do not discard any infeasible sets, but we prevent them to be the outcome of our algorithm. In
other words, the estimation procedure will always provide a set of time delays corresponding to a position in the
sound source space.} Next Section describes the branch \& bound global optimization technique proposed to solve
\refeq{eq:const-optim}.

\section{Branch \& Bound Optimization}
\label{sec:b&b_opt}
Global optimization is, in most cases, an extremely challenging task. Nevertheless, the optimization
of \refeq{eq:const-optim} is well suited for a global optimizer. \note{r3:2}{Indeed, $J$ is continuously differentiable
on ${\cal B}$, therefore $\nabla J$ is continuous. This implies that $\nabla J$ is bounded on any compact set,
in particular on ${\cal B}$. Hence, by means of theorem 9.5.1 in \cite{Searcoid-Springer-2006}, $J$ is Lipschitz on
${\cal B}$.} Subsequently, a branch \& bound (B\&B) type of algorithm is well suited.

Such optimization techniques were initially proposed for linear mixed-integer programming \cite{Dakin1965} and extended
later on to the non-linear case \cite{Leyffer2001}. They alternate between the branch and bound procedures in order to
recursively seek the potential regions where the global minimum is. While the branch step splits the potential regions
into smaller pieces, the bound step estimates the lower and upper bounds of each potential region. After the bounding,
the \textit{discarding threshold} is set to the minimum of the upper bounds. Then, all regions whose lower bound is
bigger than the discarding threshold are discarded (since they cannot contain the global minimum).

The B\&B algorithm that we propose maintains two lists of regions: ${\cal P}$ containing the potential regions and
${\cal D}$ containing the discarded regions (see \refalg{alg:bab}). The B\&B inputs are the initial list of potential
regions and the Lipschitz constant $L$. The outputs are the two maintained lists, ${\cal P}$ and ${\cal D}$ after
convergence. Each of the regions in ${\cal P}$ and in ${\cal D}$ represents a $(M-1)$-dimensional cube $(\tvect,s)$,
where $\tvect$ is the cube's centre and $s$ is its side's length. The Branch routine splits each of the cubes in ${\cal
P}$ into $2^{M-1}$ smaller cubes of size $s/2$. Next, the bound routine (see \refalg{alg:bound}) estimates the upper
($u$) and the lower ($l$) bounds of all regions in ${\cal P}$. The discarding threshold $\tau$ is then set to the
minimum of all upper bounds. All sets in ${\cal P}$ with lower bound higher than $\tau$ are moved to the discarded list 
${\cal D}$. One prominent feature of the optimization task in~\refeq{eq:const-optim} us that we seek for the minimum
on the set ${\cal B}$. The set of potential sets, ${\cal P}$ is naturally initialized to the set ${\cal B}$.
Consequently, the B\&B procedure does not require a grid-based optimization.

\begin{algorithm}
\caption{Branch and Bound}
\label{alg:bab}
\begin{algorithmic}[1]
\STATE \textbf{Input:} The Lipschitz constant $L$ and the initial list of potential regions ${\cal P}$.
\STATE \textbf{Output:} The list of potential solutions ${\cal P}$ and the list of discarded regions ${\cal D}$.
\REPEAT
\STATE \textbf{(a) } ${\cal P}$ = Branch(${\cal P}$)
\STATE \textbf{(b) } [${\cal P}$, ${\cal RD}$] = Bound(${\cal P}$,$L$)
\STATE \textbf{(c) } ${\cal D} = {\cal D} \cup {\cal RD}$
\UNTIL{Convergence}
\end{algorithmic}
\end{algorithm}

\begin{algorithm}
\caption{Bound routine of \refalg{alg:bab}}
\label{alg:bound}
\begin{algorithmic}[1]
\STATE \textbf{Input:} The Lipschitz constant $L$ and the list of potential regions ${\cal P}$.
\STATE \textbf{Output:} The list of potential regions updated ${\cal P}$ and the list of recently discarded
regions ${\cal RD}$.
\FOR{$i=1,\ldots,|{\cal P}|$}
\STATE \textbf{(a) } $l^{(i)}=J(\tvect^{(i)})-s^{(i)}L$.
\STATE \textbf{(b) } $u^{(i)}=J(\tvect^{(i)})+s^{(i)}L$.
\STATE \textbf{(c) } $\tau = \min_{i=1,\ldots,|{\cal P}|} u^{(i)}$.
\ENDFOR
\FOR{$i=1,\ldots,|{\cal P}|$}
\IF{$l^{(i)}>\tau$}
\STATE Move $(\tvect^{(i)},s^{(i)})$ from ${\cal P}$ to ${\cal RD}$.
\ENDIF
\ENDFOR
\end{algorithmic}
\end{algorithm}

The branch and bound routines are alternated until convergence. Many criteria could be used to stop the algorithm.
In order to guarantee an accurate solution, we may force a maximum size for the potential regions. Also, in case
we would like to guarantee the stability of the solution, we may track the variation of the smallest of the lower
bounds. Either way, once the algorithm has converged, we select the best region in  ${\cal P}$ among those satisfying
the constraints $\Delta$ and ${\cal E}$. If there is no such region in ${\cal P}$\note{r3:3}{\footnote{In order to
decide whether a region satisfies the constraints or not, we test its centre. This approximation is justified by the
fact that, at this stage of the algorithm, the regions are extremely small (since we force a maximum region size).}},
the B\&B algorithm is run again
providing ${\cal D}$ as the initial list of potential regions.

\section{Experimental Validation}
\label{sec:validation}

\subsection{Experimental Setup}
\label{sec:set_up}
In order to validate the proposed model and the associated estimation technique, we used an evaluation protocol with
simulated and real data. In both cases, the environment was a room of approximately $4\times4\times4$ meters ($N=3$),
with an array of $M=4$ microphones placed at (in meters)  $\Mvect_1=(2.0,2.1,1.83)^\tr$, $\Mvect_2=(1.8,2.1,1.83)^\tr$,
$\Mvect_3=(1.9,2.2,1.97)^\tr$ and $\Mvect_4=(1.9,2.0,1.97)^\tr$. \textit{The microphones are the vertices of a
tetrahedron, resulting in a non-coplanar configuration}. The sound source was placed on a sphere of $1.7$~m radius
centred at the microphone array. More precisely, the source was placed at $21$ different azimuth values, between
$-160^\circ$ and $160^\circ$, and at $9$ different elevation values between $-60^\circ$ and $60^\circ$, hence at $189$
different directions. The speech fragments emitted by the source \note{r1:m:5}{were} randomly chosen from a publicly
available data set~\cite{Garofolo93}. One hundred millisecond cuts of these sounds were used as input of the evaluated
methods.

In the simulated case, we controlled two parameters. Firstly, the value of $T_{60}$, which is a parameter of the
image-source model~\cite{Lehmann2008} (available at~\cite{LehmanISMCode}), controlling the amount of reverberations.
More precisely, $T_{60}$ measures the time needed for the emitted signal to decay $60$~dB. The higher the $T_{60}$, the
larger the amount of reverberations and their energy. In our simulations, $T_{60}$ took the following values (in
seconds): $0$, $0.1$, $0.2$, $0.4$ and $0.6$. Secondly, we controlled the amount of white noise added to the
received signals by setting the signal-to-noise ratio ($SNR$) to $-10$, $-5$, or $0$~dB. 

In the real case, we used a slightly modified version of the acquisition protocol defined in~\cite{Deleforge12a}. This
protocol was designed to automatically gather sound signals coming from different directions, using the motor system of
a robotic platform placed in a regular indoor room. \note{r2:2}{Such realistic environment inherently limits the quality
of the recordings. First of all, the noise of the acquisition device (the computer's fans) is also recorded. Second,
ambient noise associated with the room's location (between a corridor and a server room) has also a negative, but very
realistic effect on the data.} \note{t60}{We roughly estimated the acoustic characteristics of the real recordings,
namely $T_{60}\approx0.5\,\textrm{s}$ and $SNR\approx0\,\textrm{dB}$.} In our case, we replaced the dummy head used
in~\cite{Deleforge12a}, by a tetrahedron-shaped microphone array. The motor platform has two degrees of freedom (pan and
tilt), and was designed to guarantee the repeatability of the movements. A loud-speaker was placed $1.7\,\textrm{m}$
away from the array to emulate the sound source. We recorded sound waves coming from $189$ different directions. 
Consequently, the performed tests cover an extensive range of realistic situations.

\subsection{Implementation}
\label{sec:implementation}
We implemented several methods and compared them within the previously described set up. \textit{b\&b} stands for the
branch \& bound method proposed in this paper. \textit{unc}, \textit{d-lb} and \textit{s-lb}, stemmed from
\cite{Alameda-EUSIPCO-12}, represent the state-of-the-art in multichannel TDE-SSL for arbitrarily shaped microphone
arrays. \textit{dm} stands for ``direct multichannel'', which is the generalization of \cite{Chen2003a} to arrays with
arbitrary configuration. \textit{n-mult}, \textit{t-mult}, and \textit{f-mult} are variants of \cite{Canclini13}, which
represent the state-of-the-art in multilateration methods. \textit{pi} is a straightforward multilateration algorithm.
We have chosen to implement all nine methods to (i) compare the proposed algorithm to the state-of-the-art and (ii)
push the limits of existing TDE-SSL algorithms of both multilateration and multichannel SSL disciplines.

\begin{itemize} 
\item \textit{b\&b} corresponds to the branch \& bound algorithm described in \refsec{sec:b&b_opt}. The list of
potential regions is naturally initialized as ${\cal P} = \{{\cal B}\}$. The Lipschitz constant $L$ is estimated by
computing the maximum slope among one thousand point pairs randomly drawn inside the feasibility domain. 

\item \textit{unc}, \textit{d-lb} and \textit{s-lb} are directly derived from the procedure described
in \cite{Alameda-EUSIPCO-12}. All of them perform multichannel TDE to further localize the sound source. \textit{unc}
and \textit{s-lb} are proposed here to push the limits of the base method, \textit{d-lb}. These local optimization
techniques are initialized on the unconstrained (${\cal G}_u$), constrained (${\cal G}_c$) and sparse (${\cal G}_s$)
grids respectively. The details are given in \refsec{sec:details_eusipco}.

\item \textit{dm} is the straightforward generalization of \cite{Chen2003a} to arbitrarily-shaped microphone arrays.
$J$, defined in \refeq{eq:J} is evaluated on ${\cal G}_c$, and the minimum over the grid is selected. The difference
between \textit{dm} and \textit{d-lb} is that in the former no local minimization is carried out.

\item  \textit{n-mult}, \textit{t-mult} and \textit{f-mult} are implementations of the method described in
\cite{Canclini13}. In this case the time delay estimates, $\hat{\tvect}$ are computed independently (using
\cite{Knapp1976}), and the sound source position, $\Svect$, is chosen to be as close as possible to the hyperboloids
associated with $\hat{\tvect}$. Because the algorithm was designed for distributed sensor networks and not for
egocentric arrays, we had to modify it. Further explanations are given in \refsec{sec:details_canclini}.

\item \textit{pi} corresponds to pair-wise independent time delay estimation based on cross-correlation
\cite{Knapp1976}. That is, $t_{1,j}$ is the maximum of the function $\rho_{1,j}(\tau)$; This is the simplest
multilateration algorithm one can think of.
\end{itemize}
Except for \textit{n-mult}, \textit{t-mult}, and \textit{f-mult}, which provide $\Svect$ directly, all other algorithms
provide a time delay estimate. If this estimate is feasible, $\Svect$ is recovered using \refeq{eq:localization}.

\subsubsection{\emph{Methods} unc, d-lb \emph{and} s-lb}
\label{sec:details_eusipco}
In \cite{Alameda-EUSIPCO-12}, the constrained problem is converted into an unconstrained problem with a different cost
function. The intuition is that the cost function is modified to penalize those points that are closer to the
feasibility border. In practice, the inequality constraint is added to the cost by means of a log-barrier function
\begin{equation}
\left\{
\begin{array}{l}
 \displaystyle \min_{\tvect} J(\tvect) - \mu\log(\Delta(\tvect)),\vspace{0.2cm}\\
 \textrm{s.t.} \quad \tvect\in{\cal W}\cap{\cal B}, \quad {\cal E}\left(\tvect\right)=0,
\end{array}
\right.
\label{eq:dip-optim}
\end{equation}
where $\mu\geq0$ is a regularizing parameter.

Consequently, the original task \refeq{eq:const-optim} is converted into a sequence of tasks indexed by $\mu$.
Each of the problems has an optimal solution $\hat{\tvect}_\mu$. It can be proven (see \cite{Boyd04}) that
$\hat{\tvect}_\mu\rightarrow\hat{\tvect}$ when $\mu\rightarrow0$. Log-barrier methods are gradient-based techniques,
which decrease the value of $\mu$ with the iterations, thus converging to the closest feasible local minimum of $J$.
Therefore, it is recommended to provide the analytic derivatives in order to increase both the convergence speed and
the accuracy (see Appendices~\ref{sec:criterion_grad_hess} and \ref{sec:constraint_grad_hess} for the expressions of
the gradients and Hessians of the cost function and the constraints, respectively).

Unfortunately, log-barrier methods are designed for convex problems. In other words, these methods find the local
minimum closest to the initialization point. Hence, in order to find the global minimum, the algorithm must be
multiply initialized from points lying on a grid ${\cal G}$. After convergence, the minimum among all the local minima
found is assumed to be the global solution of the problem. \textit{d-lb} (dense-log-barrier) corresponds to the method
in \cite{Alameda-EUSIPCO-12}, hence solving for \refeq{eq:const-optim}, initialized on a grid ${\cal G}_c$ of 352
feasible points. \textit{unc} solves for the unconstrained problem, i.e., \refeq{eq:criterion}, and it is initialized on
a grid ${\cal G}_u$. The difference between the two grids is that while ${\cal G}_c$ contains just feasible points,
${\cal G}_u$ contains unfeasible points as well. In practice, ${\cal G}_u$ contains 456 points. The rationale of
implementing \textit{unc} is to better assess and quantify the role played by the feasibility constraints, $\Delta$ and
${\cal E}$. \textit{s-lb} (sparse-log-barrier) corresponds to the same log-barrier method initialized on a
sparse grid ${\cal G}_s$. We conjecture that the global minimum of $J$ corresponds to one of the local maxima of
$\rho_{1,m}$ in \refeq{eq:rho} for $m=2,\ldots,M$. For each microphone pair $(1,m)$ we extract $K=3$ local maxima of
$\rho_{1,m}$. ${\cal G}_s$ consists of all possible combinations of these values, thus containing $K^{M-1}=27$ points
(in the case of $M=4$ microphones). \textit{s-lb} is implemented to assess the robustness towards initialization of the
local optimization technique. Both, \textit{d-lb} and \textit{s-lb} are reimplementations of the publicly available
MATLAB log-barrier dual interior-point method \cite{IPSolverCode}.

\subsubsection{\emph{Methods} n-mult, t-mult \emph{and} f-mult}
\label{sec:details_canclini}
As already mentioned, we implemented \cite{Canclini13} with some modifications. Indeed, the method was designed for
\textit{distributed} microphone arrays. With such a setup, the sound source position lies inside the volume defined by
the microphone positions in the room. In the case of an egocentric array the sound source is necessarily located outside
the volume delimited by the microphone array. The method  described in \cite{Canclini13} seeks the locations the closest
to the hyperboloids given by the independently estimated time delays $\hat{\tvect}$. More precisely, the following
criterion is minimized:
\begin{equation}
 H(\Svect) = \sum_{1\leq m<n\leq M}\left(h_{m,n}(\Svect)\right)^2,
\label{eq:criterion_canclini}
\end{equation}
where $h_{m,n}$ is the equation of a two-sheet hyperboloid (i.e., equation \refeq{eq:normalized_hyperboloid}) with foci
$\Mvect_m$ and $\Mvect_n$ and differential value $\hat{t}_{m,n}$. We will call
$H$ the \textit{multilateration cost function}, to distinguish is from the cost function $J$. If the estimated time
delays are feasible, the minimization of \refeq{eq:criterion_canclini} is equivalent
to solve the system $\left\{h_{m,n}(\Svect)=0\right\}_{1\leq m<n\leq M}$ or to compute $L(\hat{\tvect})$, otherwise the
methods seeks the value of
$\Svect$ that best explains the TDE. We have experimentally observed that, in most of
the cases, one solution is ``inside'' the microphone array and the other one is ``outside'' the array. However, the cost
function behaves differently around these two solutions. Indeed, $H$ is much
sharper around the solution inside the microphone array. Hence, this is usually the one found by the optimization
procedure. That is why we had to modify the cost function in order to bias the
optimization:
\begin{equation}
 \tilde{H}(\Svect) = H(\Svect) + \lambda\left(\|\Svect\|^2-r\right)^2,
 \label{eq:final_canclini}
\end{equation}
where $r$ is the desired radius of the solution and $\lambda$ is a regularization parameter. This way of constraining
the optimization problem is justified by the well-known fact that the distance to
the source  is very difficult to estimate with egocentric arrays. We tested three different values for $r$:
\textit{n-mult} (near-multilateration) corresponds to $r=0.9$, \textit{t-mult} (true-multilateration) corresponds to
$r=1.7$ (which is the actual distance from the array to the source), and \textit{f-mult} (far-multilateration)
corresponds to $r=2.5$ (all measures are in meters). In all cases the optimization procedure was
initialised on a grid of 200 directions, ${\cal G}_d$, and, in order to increase the accuracy and convergence speed, the
analytic derivatives were computed (see Appendix~\ref{sec:derivatives_H}).

\subsection{Results and Discussion}
\label{sec:results}
\begin{table*}
 \caption{Results obtained with both simulated data and real data (last column). The first row shows the SNR ratio in dB.
The second row shows the values of $T_{60}$ in seconds. The remaining rows show the results with the methods outlined in
Section~\ref{sec:implementation}. For each SNR--$T_{60}$ combination and for each method, we display three values: (i)~the
proportion of inliers (the angular error is less than $30^\circ$), (ii) the mean angular error of inliers, and  (iii) their
standard deviation. Column wise, the best results are shown in \textbf{bold}. Notice that \textit{x-mult} methods often
yield the best mean angular error of inliers. However, \textit{x-mult} requires an estimate of the array-to-source
distance.}
  \label{tab:results}
\centering
\ifCLASSOPTIONtwocolumn
{\scriptsize
\begin{tabular}{c|ccccc|ccccc|ccccc||c}
\toprule
SNR& \multicolumn{5}{c|}{$0$} & \multicolumn{5}{c|}{$-5$} & \multicolumn{5}{c||}{$-10$} & Real data \\
%\cmidrule{1-16}
\midrule
$T_{60}$ & 0 & 0.1 & 0.2 & 0.4 & 0.6 & 0 & 0.1 & 0.2 & 0.4 & 0.6 & 0 & 0.1 & 0.2 & 0.4 & 0.6 & 0.5 \\
\midrule
 \multirow{3}{*}{\textit{b\&b}}  & \mb{82.1\%}  & \mb{82.8\%}  & \mb{73.8\%}  & \mb{48.3\%}  & \mb{35.7\%} &
\mb{84.1\%} 
& \mb{82.7\%} & \mb{68.6\%} & \mb{41.1\%}  & \mb{29.8\%}  & \mb{77.5\%}  & \mb{66.6\%}  & \mb{44.5\%}  & \mb{24.8\%}  &
\mb{19.0\%} & \mb{27.5\%} \\
		  & 9.59  & 10.49  & 12.65  & 14.99  & 16.10  & 10.46  & 11.58  & 13.91  & 16.07  & 16.97  & 13.45  &
14.35  & 16.53  & 18.29  & 18.63 & 16.04\\
		  & \mb{3.66}  & \mb{4.47}  & \mb{6.14}  & \mb{7.09}  & 7.30  & \mb{4.64}  & \mb{5.45}  & \mb{6.75}  &
7.44 & 7.35 & \mb{6.56} & \mb{6.85} & 7.36  & 7.29  & 7.26 & 7.55\\
\midrule\midrule
 \multirow{3}{*}{\textit{unc}}  & 38.3\%  & 36.2\%  & 33.3\%  & 23.3\%  & 19.2\%  & 37.5\%  & 37.0\%  & 31.5\%  &
21.4\%  & 16.7\%  & 33.4\%  & 29.2\%  & 21.6\%  & 14.3\%  & 11.6\% & 14.1\% \\
		  & 15.89  & 16.15  & 17.01  & 17.94  & 18.60  & 16.76  & 16.92  & 17.71  & 18.48  & 18.74  & 17.28  &
17.75  & 18.67  & 18.95  & 19.54 & 18.93\\
		  & 7.47  & 7.30  & 7.46  & 7.30  & 7.69  & 7.38  & 7.36  & 7.41  & 7.30  & \mb{7.26}  & 7.51  & 7.62  &
7.34  & \mb{7.21}  & 7.03 & 7.13\\
\midrule
 \multirow{3}{*}{\textit{d-lb}}  & 75.3\%  & 75.4\%  & 67.5\%  & 44.6\%  & 33.4\%  & 80.4\%  & 77.9\%  & 61.9\%  &
36.8\%  & 28.0\%  & 66.6\%  & 56.5\%  & 36.3\%  & 20.6\%  & 15.8\% & 22.3\%\\
		  & 10.54  & 11.55  & 13.54  & 15.53  & 16.25  & 11.74  & 12.99  & 14.74  & 16.54  & 17.11  & 14.69  &
16.01  & 17.27  & 18.28  & 18.61 & 17.51\\
		  & 4.57  & 5.26  & 6.51  & 7.21  & 7.22  & 5.52  & 6.35  & 6.91  & 7.63  & 7.38  & 6.90  & 7.19  & 7.37
 & 7.30  & 7.20 & 7.53\\
\midrule
 \multirow{3}{*}{\textit{s-lb}}  & 46.9\%  & 46.7\%  & 40.9\%  & 27.9\%  & 22.2\%  & 39.3\%  & 40.8\%  & 34.4\%  &
23.6\%  & 18.7\%  & 31.0\%  & 29.7\%  & 22.1\%  & 14.5\%  & 13.2\% & 13.2\%\\
		  & 11.63  & 12.58  & 14.79  & 17.05  & 17.67  & 13.41  & 14.58  & 16.60  & 17.76  & 18.19  & 17.13  &
17.85  & 18.46  & 19.17  & 19.65 & 18.80\\
		  & 5.54  & 6.17  & 6.98  & 7.33  & 7.13  & 6.41  & 6.74  & 7.21  & 7.33  & 7.28  & 7.36  & 7.31  &
\mb{7.00} & 7.41  & 7.26 & 7.09\\
\midrule
 \multirow{3}{*}{\textit{dm}}  & 80.3\%  & 77.4\%  & 62.4\%  & 41.2\%  & 30.2\%  & 78.3\%  & 74.0\%  & 57.7\%  &
34.8\%  & 26.9\%  & 60.4\%  & 51.3\%  & 35.6\%  & 21.4\%  & 16.3\% & 16.7\%\\
		  & 15.75  & 15.94  & 16.49  & 17.04  & 17.76  & 15.80  & 16.09  & 16.48  & 17.53  & 17.82  & 16.65  &
16.90  & 17.86  & 18.76  & 19.03 & 19.34\\
		  & 7.11  & 7.18  & 7.38  & 7.50  & 7.48  & 7.12  & 7.15  & 7.35  & 7.55  & 7.45  & 7.30  & 7.31  & 7.33
 & 7.29  & 7.34 & 7.00\\
\midrule\midrule
 \multirow{3}{*}{\textit{n-mult}}  & 60.8\%  & 54.6\%  & 40.5\%  & 22.6\%  & 16.0\%  & 49.7\%  & 46.3\%  & 35.1\% 
& 18.6\%  & 13.7\%  & 41.9\%  & 36.0\%  & 23.0\%  & 12.8\%  & 10.1\% & 13.2\%\\
		  & 7.99  & 9.00  & 11.18  & \mb{14.11}  & 15.46  & \mb{9.23}  & \mb{10.56}  & 13.18  & \mb{15.82}  &
16.79 & \mb{13.11}  & \mb{14.28}  & 16.26  & \mb{17.84}  & 18.23 & 17.54\\
		  & 5.45  & 6.23  & 7.43  & 8.11  & 7.97  & 6.30  & 6.98  & 7.51  & 7.93  & 7.51  & 7.32  & 7.46  & 7.68
 & 7.40  & \mb{7.01} & 7.27\\
\midrule
 \multirow{3}{*}{\textit{t-mult}}  & 61.0\%  & 53.1\%  & 39.4\%  & 22.0\%  & 15.5\%  & 50.0\%  & 45.5\%  & 34.1\% 
& 18.3\%  & 13.6\%  & 41.3\%  & 35.1\%  & 22.4\%  & 12.3\%  & 9.9\% &17.38\% \\
		  & \textbf{7.81}  & \mb{8.75}  & \mb{11.06}  & 14.14  & \mb{15.41}  & 9.42  & 10.72  & \mb{13.09}  &
16.02 & \mb{16.52} & 13.65 & 14.50  & \mb{16.22}  & 18.15  & \mb{18.17} & 16.1\\
		  & 5.09  & 6.03  & 7.22  & 8.10  & 7.74  & 6.14  & 7.00  & 7.43  & 7.88  & 7.36  & 7.34  & 7.38  & 7.39
 & 7.45  & 7.22 & 7.80\\
\midrule
 \multirow{3}{*}{\textit{f-mult}}  & 59.6\%  & 52.5\%  & 38.6\%  & 21.7\%  & 15.0\%  & 48.8\%  & 44.5\%  & 34.0\% 
& 17.9\%  & 13.5\%  & 40.6\%  & 34.7\%  & 21.9\%  & 12.0\%  & 9.8\% & 17.91\%\\
		  & 8.38  & 9.31  & 11.51  & 14.27  & 15.69  & 9.92  & 11.22  & 13.54  & 16.21  & 16.92  & 13.87  &
14.75  & 16.34  & 17.87  & 18.31 & \mb{15.3}\\
		  & 5.35  & 6.16  & 7.21  & 7.90  & 7.66  & 6.24  & 7.03  & 7.33  & 7.70  & 7.37  & 7.27  & 7.41  & 7.36
 & 7.41  & 7.21 & 7.78\\
\midrule
 \multirow{3}{*}{\textit{pi}}  & 53.7\%  & 53.3\%  & 44.3\%  & 30.3\%  & 23.6\%  & 41.4\%  & 41.5\%  & 32.7\%  &
21.9\%  & 16.9\%  & 29.6\%  & 28.9\%  & 20.8\%  & 14.7\%  & 12.5\% & 12.9\%\\
		  & 11.31  & 12.47  & 14.60  & 16.81  & 17.67  & 13.24  & 14.23  & 16.50  & 18.12  & 18.08  & 17.04  &
17.76  & 18.92  & 18.98  & 19.25 & 19.03\\
		  & 5.55  & 6.17  & 6.92  & 7.33  & 7.60  & 6.11  & 6.71  & 7.09  & \mb{7.15}  & 7.43  & 7.19  & 7.17  &
7.49  & 7.58  & 7.22 & \mb{6.88}\\
\bottomrule
\end{tabular}}
\else
{\tiny
\begin{tabular}{c|ccccc|ccccc|ccccc||c}
\toprule
SNR& \multicolumn{5}{c|}{$0$} & \multicolumn{5}{c|}{$-5$} & \multicolumn{5}{c||}{$-10$} & Real data \\
%\cmidrule{1-16}
\midrule
$T_{60}$ & 0 & 0.1 & 0.2 & 0.4 & 0.6 & 0 & 0.1 & 0.2 & 0.4 & 0.6 & 0 & 0.1 & 0.2 & 0.4 & 0.6 & 0.5 \\
\midrule
 \multirow{3}{*}{\textit{b\&b}}  & \mb{82.1\%}  & \mb{82.8\%}  & \mb{73.8\%}  & \mb{48.3\%}  & \mb{35.7\%} &
\mb{84.1\%} 
& \mb{82.7\%} & \mb{68.6\%} & \mb{41.1\%}  & \mb{29.8\%}  & \mb{77.5\%}  & \mb{66.6\%}  & \mb{44.5\%}  & \mb{24.8\%}  &
\mb{19.0\%} & \mb{27.5\%} \\
		  & 9.59  & 10.49  & 12.65  & 14.99  & 16.10  & 10.46  & 11.58  & 13.91  & 16.07  & 16.97  & 13.45  &
14.35  & 16.53  & 18.29  & 18.63 & 16.04\\
		  & \mb{3.66}  & \mb{4.47}  & \mb{6.14}  & \mb{7.09}  & 7.30  & \mb{4.64}  & \mb{5.45}  & \mb{6.75}  &
7.44 & 7.35 & \mb{6.56} & \mb{6.85} & 7.36  & 7.29  & 7.26 & 7.55\\
\midrule\midrule
 \multirow{3}{*}{\textit{unc}}  & 38.3\%  & 36.2\%  & 33.3\%  & 23.3\%  & 19.2\%  & 37.5\%  & 37.0\%  & 31.5\%  &
21.4\%  & 16.7\%  & 33.4\%  & 29.2\%  & 21.6\%  & 14.3\%  & 11.6\% & 14.1\% \\
		  & 15.89  & 16.15  & 17.01  & 17.94  & 18.60  & 16.76  & 16.92  & 17.71  & 18.48  & 18.74  & 17.28  &
17.75  & 18.67  & 18.95  & 19.54 & 18.93\\
		  & 7.47  & 7.30  & 7.46  & 7.30  & 7.69  & 7.38  & 7.36  & 7.41  & 7.30  & \mb{7.26}  & 7.51  & 7.62  &
7.34  & \mb{7.21}  & 7.03 & 7.13\\
\midrule
 \multirow{3}{*}{\textit{d-lb}}  & 75.3\%  & 75.4\%  & 67.5\%  & 44.6\%  & 33.4\%  & 80.4\%  & 77.9\%  & 61.9\%  &
36.8\%  & 28.0\%  & 66.6\%  & 56.5\%  & 36.3\%  & 20.6\%  & 15.8\% & 22.3\%\\
		  & 10.54  & 11.55  & 13.54  & 15.53  & 16.25  & 11.74  & 12.99  & 14.74  & 16.54  & 17.11  & 14.69  &
16.01  & 17.27  & 18.28  & 18.61 & 17.51\\
		  & 4.57  & 5.26  & 6.51  & 7.21  & 7.22  & 5.52  & 6.35  & 6.91  & 7.63  & 7.38  & 6.90  & 7.19  & 7.37
 & 7.30  & 7.20 & 7.53\\
\midrule
 \multirow{3}{*}{\textit{s-lb}}  & 46.9\%  & 46.7\%  & 40.9\%  & 27.9\%  & 22.2\%  & 39.3\%  & 40.8\%  & 34.4\%  &
23.6\%  & 18.7\%  & 31.0\%  & 29.7\%  & 22.1\%  & 14.5\%  & 13.2\% & 13.2\%\\
		  & 11.63  & 12.58  & 14.79  & 17.05  & 17.67  & 13.41  & 14.58  & 16.60  & 17.76  & 18.19  & 17.13  &
17.85  & 18.46  & 19.17  & 19.65 & 18.80\\
		  & 5.54  & 6.17  & 6.98  & 7.33  & 7.13  & 6.41  & 6.74  & 7.21  & 7.33  & 7.28  & 7.36  & 7.31  &
\mb{7.00} & 7.41  & 7.26 & 7.09\\
\midrule
 \multirow{3}{*}{\textit{dm}}  & 80.3\%  & 77.4\%  & 62.4\%  & 41.2\%  & 30.2\%  & 78.3\%  & 74.0\%  & 57.7\%  &
34.8\%  & 26.9\%  & 60.4\%  & 51.3\%  & 35.6\%  & 21.4\%  & 16.3\% & 16.7\%\\
		  & 15.75  & 15.94  & 16.49  & 17.04  & 17.76  & 15.80  & 16.09  & 16.48  & 17.53  & 17.82  & 16.65  &
16.90  & 17.86  & 18.76  & 19.03 & 19.34\\
		  & 7.11  & 7.18  & 7.38  & 7.50  & 7.48  & 7.12  & 7.15  & 7.35  & 7.55  & 7.45  & 7.30  & 7.31  & 7.33
 & 7.29  & 7.34 & 7.00\\
\midrule\midrule
 \multirow{3}{*}{\textit{n-mult}}  & 60.8\%  & 54.6\%  & 40.5\%  & 22.6\%  & 16.0\%  & 49.7\%  & 46.3\%  & 35.1\% 
& 18.6\%  & 13.7\%  & 41.9\%  & 36.0\%  & 23.0\%  & 12.8\%  & 10.1\% & 13.2\%\\
		  & 7.99  & 9.00  & 11.18  & \mb{14.11}  & 15.46  & \mb{9.23}  & \mb{10.56}  & 13.18  & \mb{15.82}  &
16.79 & \mb{13.11}  & \mb{14.28}  & 16.26  & \mb{17.84}  & 18.23 & 17.54\\
		  & 5.45  & 6.23  & 7.43  & 8.11  & 7.97  & 6.30  & 6.98  & 7.51  & 7.93  & 7.51  & 7.32  & 7.46  & 7.68
 & 7.40  & \mb{7.01} & 7.27\\
\midrule
 \multirow{3}{*}{\textit{t-mult}}  & 61.0\%  & 53.1\%  & 39.4\%  & 22.0\%  & 15.5\%  & 50.0\%  & 45.5\%  & 34.1\% 
& 18.3\%  & 13.6\%  & 41.3\%  & 35.1\%  & 22.4\%  & 12.3\%  & 9.9\% &17.38\% \\
		  & \textbf{7.81}  & \mb{8.75}  & \mb{11.06}  & 14.14  & \mb{15.41}  & 9.42  & 10.72  & \mb{13.09}  &
16.02 & \mb{16.52} & 13.65 & 14.50  & \mb{16.22}  & 18.15  & \mb{18.17} & 16.1\\
		  & 5.09  & 6.03  & 7.22  & 8.10  & 7.74  & 6.14  & 7.00  & 7.43  & 7.88  & 7.36  & 7.34  & 7.38  & 7.39
 & 7.45  & 7.22 & 7.80\\
\midrule
 \multirow{3}{*}{\textit{f-mult}}  & 59.6\%  & 52.5\%  & 38.6\%  & 21.7\%  & 15.0\%  & 48.8\%  & 44.5\%  & 34.0\% 
& 17.9\%  & 13.5\%  & 40.6\%  & 34.7\%  & 21.9\%  & 12.0\%  & 9.8\% & 17.91\%\\
		  & 8.38  & 9.31  & 11.51  & 14.27  & 15.69  & 9.92  & 11.22  & 13.54  & 16.21  & 16.92  & 13.87  &
14.75  & 16.34  & 17.87  & 18.31 & \mb{15.3}\\
		  & 5.35  & 6.16  & 7.21  & 7.90  & 7.66  & 6.24  & 7.03  & 7.33  & 7.70  & 7.37  & 7.27  & 7.41  & 7.36
 & 7.41  & 7.21 & 7.78\\
\midrule
 \multirow{3}{*}{\textit{pi}}  & 53.7\%  & 53.3\%  & 44.3\%  & 30.3\%  & 23.6\%  & 41.4\%  & 41.5\%  & 32.7\%  &
21.9\%  & 16.9\%  & 29.6\%  & 28.9\%  & 20.8\%  & 14.7\%  & 12.5\% & 12.9\%\\
		  & 11.31  & 12.47  & 14.60  & 16.81  & 17.67  & 13.24  & 14.23  & 16.50  & 18.12  & 18.08  & 17.04  &
17.76  & 18.92  & 18.98  & 19.25 & 19.03\\
		  & 5.55  & 6.17  & 6.92  & 7.33  & 7.60  & 6.11  & 6.71  & 7.09  & \mb{7.15}  & 7.43  & 7.19  & 7.17  &
7.49  & 7.58  & 7.22 & \mb{6.88}\\
\bottomrule
\end{tabular}}
\fi
\end{table*}

\reftab{tab:results} summarizes the results obtained with the evaluated methods on simulated as well as on real data. While
 columns 2 to 16 correspond to simulated data, the last column corresponds to real data. In the simulated case,
the first row displays the $SNR$ in dB and the second row displays $T_{60}$ in seconds. The next rows display the
performance of the evaluated methods, and are split into three groups by double lines: the proposed \textit{b\&b}
method (top), state-of-the-art multichannel TDE methods (middle), and state-of-the-art multilateration methods (bottom).
For each SNR--$T_{60}$ combination and for each method we provide three numbers: (i) the percentage of inliers
(angular error $<30^\circ$), (ii) the average and  (iii) the standard deviation of the angular error over the inliers. These
quantities are computed on $100$~ms long signals received from the $189$ different sound source positions, and the best
values are shown in bold. On an average, each entry of the table roughly corresponds to $3,000$ localization trials.
Overall, we performed more than $400,000$ localizations.

% table
\begin{table*}[t]
 \caption{This table displays the histograms associated with the localization error, organized in the same way as \reftab{tab:results}.
 The histogram abscissae start at $0^\circ$ (no error) and span to $180^\circ$ (maximum error).}
  \label{tab:histograms}
\centering
{\scriptsize
\begin{tabular}{c|ccccc|ccccc|ccccc||c}
\toprule
SNR& \multicolumn{5}{c|}{$0$} & \multicolumn{5}{c|}{$-5$} & \multicolumn{5}{c||}{$-10$} & Real data \\
%\cmidrule{1-16}
\midrule
$T_{60}$ & 0 & 0.1 & 0.2 & 0.4 & 0.6 & 0 & 0.1 & 0.2 & 0.4 & 0.6 & 0 & 0.1 & 0.2 & 0.4 & 0.6 & 0.5 \\
\midrule
 \textit{b\&b} &\ih{bab_0_0}& \ih{bab_0_1} & \ih{bab_0_2} & \ih{bab_0_4} & \ih{bab_0_6} &
\ih{bab_-5_0} & \ih{bab_-5_1} & \ih{bab_-5_2} & \ih{bab_-5_4} & \ih{bab_-5_6} & \ih{bab_-10_0} &
\ih{bab_-10_1} & \ih{bab_-10_2} & \ih{bab_-10_4} & \ih{bab_-10_6} & \ih{bab_real} \\
\midrule\midrule
 \textit{unc} &\ih{ngtde_0_0}& \ih{ngtde_0_1} & \ih{ngtde_0_2} & \ih{ngtde_0_4} & \ih{ngtde_0_6} &
\ih{ngtde_-5_0} & \ih{ngtde_-5_1} & \ih{ngtde_-5_2} & \ih{ngtde_-5_4} & \ih{ngtde_-5_6} & \ih{ngtde_-10_0} &
\ih{ngtde_-10_1} & \ih{ngtde_-10_2} & \ih{ngtde_-10_4} & \ih{ngtde_-10_6} & \ih{ngtde_real} \\
\midrule
 \textit{d-lb} &\ih{dip_0_0}& \ih{dip_0_1} & \ih{dip_0_2} & \ih{dip_0_4} & \ih{dip_0_6} &
\ih{dip_-5_0} & \ih{dip_-5_1} & \ih{dip_-5_2} & \ih{dip_-5_4} & \ih{dip_-5_6} & \ih{dip_-10_0} &
\ih{dip_-10_1} & \ih{dip_-10_2} & \ih{dip_-10_4} & \ih{dip_-10_6} & \ih{dip_real} \\
\midrule
 \textit{s-lb} &\ih{bp2dip_0_0}& \ih{bp2dip_0_1} & \ih{bp2dip_0_2} & \ih{bp2dip_0_4} & \ih{bp2dip_0_6} &
\ih{bp2dip_-5_0} & \ih{bp2dip_-5_1} & \ih{bp2dip_-5_2} & \ih{bp2dip_-5_4} & \ih{bp2dip_-5_6} & \ih{bp2dip_-10_0} &
\ih{bp2dip_-10_1} & \ih{bp2dip_-10_2} & \ih{bp2dip_-10_4} & \ih{bp2dip_-10_6} & \ih{bp2dip_real} \\
\midrule
\textit{dm} &\ih{init_0_0}& \ih{init_0_1} & \ih{init_0_2} & \ih{init_0_4} & \ih{init_0_6} &
\ih{init_-5_0} & \ih{init_-5_1} & \ih{init_-5_2} & \ih{init_-5_4} & \ih{init_-5_6} & \ih{init_-10_0} &
\ih{init_-10_1} & \ih{init_-10_2} & \ih{init_-10_4} & \ih{init_-10_6} & \ih{init_real} \\
\midrule\midrule
\textit{n-mult} &\ih{canclini-n_0_0}& \ih{canclini-n_0_1} & \ih{canclini-n_0_2} & \ih{canclini-n_0_4} &
\ih{canclini-n_0_6} & \ih{canclini-n_-5_0} & \ih{canclini-n_-5_1} & \ih{canclini-n_-5_2} & \ih{canclini-n_-5_4} &
\ih{canclini-n_-5_6} & \ih{canclini-n_-10_0} & \ih{canclini-n_-10_1} & \ih{canclini-n_-10_2} & \ih{canclini-n_-10_4} &
\ih{canclini-n_-10_6} & \ih{canclini-n_real} \\
\midrule
\textit{t-mult} &\ih{canclini-t_0_0}& \ih{canclini-t_0_1} & \ih{canclini-t_0_2} & \ih{canclini-t_0_4} &
\ih{canclini-t_0_6} & \ih{canclini-t_-5_0} & \ih{canclini-t_-5_1} & \ih{canclini-t_-5_2} & \ih{canclini-t_-5_4} &
\ih{canclini-t_-5_6} & \ih{canclini-t_-10_0} & \ih{canclini-t_-10_1} & \ih{canclini-t_-10_2} & \ih{canclini-t_-10_4} &
\ih{canclini-t_-10_6} & \ih{canclini-t_real} \\
\midrule
\textit{f-mult} &\ih{canclini-f_0_0}& \ih{canclini-f_0_1} & \ih{canclini-f_0_2} & \ih{canclini-f_0_4} &
\ih{canclini-f_0_6} & \ih{canclini-f_-5_0} & \ih{canclini-f_-5_1} & \ih{canclini-f_-5_2} & \ih{canclini-f_-5_4} &
\ih{canclini-f_-5_6} & \ih{canclini-f_-10_0} &\ih{canclini-f_-10_1} & \ih{canclini-f_-10_2} & \ih{canclini-f_-10_4} &
\ih{canclini-f_-10_6} & \ih{canclini-f_real} \\
\midrule
\textit{pi} &\ih{bypairs_0_0}& \ih{bypairs_0_1} & \ih{bypairs_0_2} & \ih{bypairs_0_4} & \ih{bypairs_0_6} &
\ih{bypairs_-5_0} & \ih{bypairs_-5_1} & \ih{bypairs_-5_2} & \ih{bypairs_-5_4} & \ih{bypairs_-5_6} & \ih{bypairs_-10_0} &
\ih{bypairs_-10_1} & \ih{bypairs_-10_2} & \ih{bypairs_-10_4} & \ih{bypairs_-10_6} & \ih{bypairs_real} \\
\bottomrule
\end{tabular}}
\end{table*}

Roughly speaking, multichannel algorithms yield better results than multilateration. Among the algorithms belonging to
the latter class, \textit{n-mult}, \textit{n-mult} , and \textit{n-mult}  (jointly referred to as \textit{x-mult})
perform better than \textit{pi} in low-reverberant environments, independently of the level of noise. However, in high
reverberant environments, \textit{x-mult} performs slightly worse than \textit{pi}. It is worth noticing that, the value
of the parameter $r$ (in the constrained optimization formulation \refeq{eq:final_canclini}) has a small impact onto the
method's performance. Indeed, the variation of the inlier percentage is not greater than $2\%$ and the variation of the
mean and standard deviation is less than $1^\circ$.

All the multichannel TDE algorithms perform as expected with respect to the environmental parameters: The performance
decreases as $T_{60}$ is increased. However, the $SNR$ and $T_{60}$ have different effects on the objective function,
$J$. On one side, the sensor noise decorrelates the microphone signals leading to many randomly spread local minima and
increasing the value of the true minimum. If this effect is extreme, the hope for a good estimate decreases fast. On the
other side, the reverberations produce only a few strong local minima. This perturbation is systematic given the source
position in the room. Hence, there is hope to learn the effect of such reverberations in order to improve the quality of
the estimates. Clearly, these perturbation types (noise and reverberations) have different effects on the results.

Concerning the methods themselves, we noticed that \textit{unc} achieves a very low percentage of inliers. This fully
justifies the need of the geometric constraint introduced in this paper. In other words, the cost function \refeq{eq:J}
suffers from a lot of local minima outside the feasible domain. Thus, the naive idea of estimating the time delays
without adding information about the geometry of the microphone array, does not really work. We also remark that, except
for the two first cases (the easiest ones), the \textit{d-lb} method outperforms the \textit{dm} method, since the
former carries out a local minimization. Regarding the two easiest cases, it is not clear which of the two methods shows
a better performance. On one side, \textit{dm} captures more inliers. On the other side, both the mean angular error and
standard deviation over the set of inliers are significantly lower with \textit{d-lb} than with \textit{dm}. The sparse
initialization strategy does not show a remarkable performance. Indeed, the localization quality is comparable to
\textit{d-lb}, but the percentage of inliers is much lower. Thus, a method able to deal with large amounts of outliers
should be added in order to clean up the localization results provided by \textit{s-lb}. More importantly, we highlight
the performance of \textit{b\&b}. This method yields the highest percentage of inliers in all the tests. Moreover, the
quality of the localization is comparable, if not better, with \textit{d-lb} or with \textit{x-mult}.
\note{r1:3}{Regarding the percentage of outliers and the standard deviation, the \textit{b\&b} methods proves to
obtain the best results. Notice that multilateration methods often yield the best mean angular error of inliers.
However, these methods must be provided an estimate of the distance to the source: \textit{t-mult} was provided with
the true array-to-source distance. We conclude that \textit{b\&b} is the method of choice in the presence noise and
outliers.}

The behaviour of the methods described above on simulated data is similar on real data. The proposed multichannel
method outperforms the state-of-the-art on both multichannel TDE and multilateration. Among the multichannel algorithms,
\textit{unc} and \textit{s-lb} show very bad performance. Even if \textit{dm}, and specially \textit{d-lb} prove to work
to some extent, the best method is \textit{b\&b} since it has the highest percentage of inliers and the localization
quality is comparable to the one of \textit{d-lb} and of \textit{x-mult}. Finally, we noticed that the results on real
data roughly correspond to the simulated case with $T_{60}=0.6$~s and $SNR=-5$~dB, which is a very challenging scenario.

\note{r3:histograms}{The results of \reftab{tab:results} are complemented error histograms displayed in
\reftab{tab:histograms}. This table has a row-column structure that is strictly identical with \reftab{tab:results}.
The histograms count all localization trials, contrary to \reftab{tab:results} where only the inliers were used to
compute the mean and the standard deviation. \reftab{tab:histograms} is meant to provide a quantitative evaluation of
the error distribution. An ideal localization situation would show a histogram with a full leftmost bin ($0^\circ$
error) all the other bins being empty. One can see that the results of both tables are correlated. On one hand, good
localization results correspond to histograms whose mass is mainly concentrated to the left. On the other hand, bad
localization results correspond to histograms whose mass is evenly distributed over the histogram bins. However, we
observe three different histogram patterns among the methods reporting low performance. First, in some cases a very
rough estimate of the position could be extracted, namely the mass is concentrated on the left half of the histogram
(but not close to $0^\circ$), e.g., (\textit{b\&b},$-10$,$0.2$), (\textit{dm},$-5$,$0.4$) or (\textit{pi},$0$,$0.6$).
Second, in some cases a fairly accurate estimate is obtained, but only in 50\% of the trials. This translates into a
histogram with two large peaks, on of them close to $0^\circ$ and the other far apart. Consequently, some rough estimate
of the source position would allow to discard most of the outliers. Examples of this are the \textit{x-mult} methods.
Third, the worst case scenario, in which the error is uniformly distributed, do not provide any meaningful result,
namely for $SNR=-10$ dB and $T_{60}=0.6$ s.}

\begin{table}
 \caption{Average execution times and standard deviations for each method on 400 trials. All quantities are expressed
in seconds, except for those with *, which are expressed in milliseconds.}
 \label{tab:time}
\centering
 \begin{tabular}{c|c||cccc||cc}
%   \toprule
%     Method & \textit{pi} & \textit{x-mult} & \textit{dm} & \textit{unc} & \textit{d-lb} & \textit{s-lb} &
% \textit{b\&b} \\
%   \midrule
%    Mean & 0.014 & 2.021 & 0.108 & 143.5 & 151 & 1.603 & 54.4 \\
%    Std & 0.002 & 0.293 & 0.029 & 9.282 & 6.671 & 1.811 & 5.133 \\
%   \bottomrule
\toprule
    Method & \textit{b\&b} & \textit{unc} & \textit{d-lb} & \textit{s-lb} & \textit{dm} & \textit{x-mult} &
\textit{pi} \\
  \midrule
   Mean & 9.17 & 25.13 & 10.34 & 0.958 & 0.103 & 2.02 & 13.5* \\
   Std & 1.082 & 2.759 & 1.316 & 0.810 & 3.64* & 0.265 & 0.310* \\
  \bottomrule
 \end{tabular}
\end{table}

Finally, we performed a statistical analysis of the execution times. \reftab{tab:time} shows the average and standard
deviation of the execution times for each one of the tested methods and  \note{r3:5}{on 400 trials}. All the methods
were implemented in MATLAB and the code was run on the very same computer. The methods \textit{n-mult}, \textit{t-mult}
and \textit{f-mult} were not evaluated separately because the parameter $r$ does not have any effect on the execution
time. We first observe that methods with low computational complexity correspond to methods that are not robust
(\textit{pi}, \textit{x-mult}, \textit{dm} and \textit{s-lb}). There are some methods that are neither robust nor fast
(\textit{unc}). A couple of methods present high robustness but high complexity (\textit{d-lb} and \textit{b\&b}).
However, \note{r:complexity}{\st{because our current software is an unoptimized implementation, we believe that the
execution time of the \textit{b\&b} method could be divided by a factor of $100$ and hence reach execution times of the
order of 0.5 seconds. Indeed, dedicated code-}} optimization techniques and smart approximations will lead to
\textit{b\&b}-based localization algorithms that are both efficient and robust. \note{r2:nao}{Indeed, platform-dedicated
algorithm optimization will reduce the computational time of the proposed sound source localization procedure. Moreover,
the accuracy of the localization results may be adjusted to the desired application/environment. For example, we could
use the proposed framework to obtain a rough estimate of the sound source position. The semantic context of the ongoing
social interplay will help us select the location of interest among the rough estimates. This coarse location of
interest could be then refined with the very same algorithm, but with a much smaller search space.}

\section{Conclusions and Future Work}
\label{sec:conclusions}
In this paper we addressed the problem of sound source localization from time delay estimates using non-coplanar
microphone arrays. Starting from the direct path propagation model, we derived the full geometric analysis associated
with an arbitrarily shaped non-coplanar microphone array. The necessary and sufficient conditions for the time delays to
correspond to a position in the source space are expressed by means of two feasibility conditions. If they are
satisfied, the position of the sound source can be recovered in closed-form, from the TDEs. Remarkably, the only
knowledge required to build the feasibility conditions and the localization mapping is the microphones' position. A
multichannel criterion for TDE allows us to cast the problem into an optimization task, which is
constrained by the feasibility conditions. A branch-and-bound global optimization technique is proposed to solve the
programming task, and hence to estimate the time delays and to localize the sound source. An extensive set of experiments
is performed on simulated and real data. The experiments clearly show that the global optimization technique that we
proposed outperforms existing methods in both the multilateration and the multichannel SSL literatures.

This work could be extended in several ways. First of all, considering the multiple source case. This could be achieved
using a frequency filter bank, that would also discard empty frequency bands as in \cite{Valin2006}. Second, a
different set of experiments could be performed on distributed microphone arrays, to evaluate the behaviour of the
proposed methods in such settings. Third, the method could also be used in calibration applications. Indeed, the
positions of the microphones could be estimated if they were free parameters in our current formulation. In that case,
measures from many different source positions would certainly be required, e.g., \cite{KFH13}. Fourth, by testing the proposed model and algorithms
in the case of dynamic sources, and subsequently extending the framework to perform tracking. Finally, experiments with
higher number of microphones should be performed, and the influence of the microphones' positions should be evaluated.

\appendices

\section{The Derivatives of the Cost Function}
\label{sec:criterion_grad_hess}
The log-barrier algorithm relies on the use of the gradient and the Hessian of both, the
objective function and the constraint(s). Providing the analytic expression for them would lead to a much more efficient
and precise algorithm than estimating them using finite differences. Hence, this Section is devoted to the derivation of
both the gradient and the Hessian of $J$, the cost function of \refeq{eq:const-optim}.

We will use three formulas from matrix calculus. Let $\Ymat:\mathbb{R}\rightarrow\mathbb{R}^{M\times M}$, be
a matrix function depending on $y$, the following formulas hold:
\begin{itemize}
 \item $\displaystyle \frac{\partial\, \det\left( \Ymat \right)} {\partial y} = \det(\Ymat) \trace \left( \Ymat^{-1}
\frac{\partial \Ymat}{\partial y} \right) $ \vspace{0.2cm}
  \item $\displaystyle\frac{\partial\, \trace\left( \Ymat \right)} {\partial y} = \trace\left( \frac{\partial\, \Ymat} {\partial y} \right) $ \vspace{0.2cm}
 \item $\displaystyle \frac{\partial\, \Ymat^{-1}}{\partial y} = - \Ymat^{-1}\frac{\partial \Ymat}{\partial y}\Ymat^{-1}$
\end{itemize}

Recall that the function we want to derivative is $J = \det\left(\Rmat\right)$. From the rules of matrix
calculus we have:
\begin{equation}
 \frac{\partial J}{\partial t_{1,m}} = \frac{\partial\,\det\left(\Rmat\right)}{\partial t_{1,m}} = 
\det\left(\Rmat\right)\trace\left(\Rmat^{-1}\frac{\partial\,\Rmat}{\partial t_{1,m}}\right).
\end{equation}

In addition we can compute the second derivative:
\begin{equation}
 \frac{\partial^2 J}{\partial t_{1,n}\partial t_{1,m}} =
\frac{\partial}{\partial t_{1,n}}\left[\det\left(\Rmat\right)\trace\left(\Rmat^{-1}\frac{\partial\,\Rmat}{\partial
t_{1,m}}\right)\right],
\end{equation}
whose full expression \ifCLASSOPTIONtwocolumn can be found in \refeq{eq:Jd2}.\else is:\fi
\ifCLASSOPTIONtwocolumn
\begin{figure*}
 \begin{equation}
   \frac{\partial^2\,\tilde{J}}{\partial t_{1,j}\partial t_{1,k}} =
 \det\left(\Rmat\right)
\left[\trace\left(\Rmat^{-1}\frac{\partial\,\Rmat}{\partial
t_{1,j}}\right)\trace\left(\Rmat^{-1}\frac{\partial\,\Rmat}{\partial
t_{1,k}}\right) + \trace\left( -\Rmat^{-1}\frac{\partial\,\Rmat}{\partial
t_{1,j}}\Rmat^{-1}\frac{\partial\,\Rmat}{\partial t_{1,k}} +
\Rmat^{-1}\frac{\partial^2\,\Rmat}{\partial t_{1,j}\partial t_{1,k}}\right)\right].
\label{eq:Jd2}
 \end{equation}
\end{figure*}
\else
 \begin{equation}
   \frac{\partial^2\,\tilde{J}}{\partial t_{1,j}\partial t_{1,k}} =
 \det\left(\Rmat\right)
\left[\trace\left(\Rmat^{-1}\frac{\partial\,\Rmat}{\partial
t_{1,j}}\right)\trace\left(\Rmat^{-1}\frac{\partial\,\Rmat}{\partial
t_{1,k}}\right) + \trace\left( -\Rmat^{-1}\frac{\partial\,\Rmat}{\partial
t_{1,j}}\Rmat^{-1}\frac{\partial\,\Rmat}{\partial t_{1,k}} +
\Rmat^{-1}\frac{\partial^2\,\Rmat}{\partial t_{1,j}\partial t_{1,k}}\right)\right].
\label{eq:Jd2}
 \end{equation}
\fi
Hence, in order to compute the first and second derivatives of the criterion $J$, we need the first and second
derivatives of the matrix $\Rmat$. We recall its expression:
{\small\[\Rmat =\left(
\begin{array}{cccc}
 1 & \rho_{1,2}(t_{1,2}) & \rho_{1,3}(t_{1,3}) & \cdots \\
 \rho_{1,2}(t_{1,2}) & 1 & \rho_{2,3}(-t_{1,2}+t_{1,3}) & \cdots \\
 \rho_{1,3}(t_{1,3}) & \rho_{2,3}(-t_{1,2}+t_{1,3}) & 1 & \cdots \\
 \vdots & \vdots & \vdots & \ddots \\
\end{array}
\right).\]}

We notice that only the $m-1\th$ row and column depend on $t_{1,m}$. Since $\Rmat$ is symmetric, we do not need to take
derivative of the $m-1\th$ row and column separately, but compute only the derivative of:
{\small\[\rvect_m =\left(\begin{array}{c}
\rho_{1,m}(t_{1,m}) \\ \rho_{2,m}(-t_{1,2}+t_{1,m}) \\ \vdots \\ \rho_{m-1,m}(-t_{1,m-1}+t_{1,m}) \\ 1
\\ \rho_{m,m+1}(-t_{1,m}+t_{1,m+1}) \\ \vdots \\ \rho_{m,M}(-t_{1,m}+t_{1,M}) \end{array}
\right),\]}
which is
{\small\[\frac{\partial\rvect_m}{\partial t_{1,m}} =\left(\begin{array}{c}
\rho_{1,m}'(t_{1,m}) \\ \rho_{2,m}'(-t_{1,2}+t_{1,m}) \\ \vdots \\ \rho_{m-1,m}'(-t_{1,m-1}+t_{1,m}) \\ 0
\\ -\rho_{m,m+1}'(-t_{1,m}+t_{1,m+1}) \\ \vdots \\ -\rho_{m,M}'(-t_{1,m}+t_{1,M}) \end{array}
\right).\]}

When computing the second derivative of $\Rmat$ with respect to $t_{1,m}$ and $t_{1,n}$ we need to differentiate two
cases:\\
% \begin{description}
%  \item [$m=n$] 

\noindent\fbox{$m=n$} This fills the diagonal of the Hessian. Notice that:
{\small\[\frac{\partial^2\rvect_m}{\partial t_{1,m}^2} =\left(\begin{array}{c}
\rho_{1,m}''(t_{1,m}) \\ \rho_{2,m}''(-t_{1,2}+t_{1,m}) \\ \vdots \\ \rho_{m-1,m}''(-t_{1,m-1}+t_{1,m}) \\ 0
\\ \rho_{m,m+1}''(-t_{1,m}+t_{1,m+1}) \\ \vdots \\ \rho_{m,M}''(-t_{1,m}+t_{1,M}) \end{array}
\right).\]}
%  \item [$m>n$] 

\noindent\fbox{$m>n$} This fills the lower triangular matrix of the Hessian (and the upper triangular part since the
Hessian is symmetric, i.e., that $\tilde{J}$ is twice continuously differentiable). Only the $n-1\th$ position of the
vector $\displaystyle\frac{\partial^2\rvect_m}{\partial t_{1,n}\partial t_{1,m}}$ is not null, taking the value:
$-\rho_{m,n}''(t_{1,m}-t_{1,n})$.
% \end{description}

\section{The Derivatives of the Constraints}
\label{sec:constraint_grad_hess}
In this Section we compute the formulae for the first and the second derivatives of the non-linear constraint $\Delta$.
Recall the expression from \refeq{eq:delta}:
\[\Delta = \left<\Avect,\Bvect-\Mvect_1\right>^2 - \|\Bvect-\Mvect_1\|^2\left(\|\Avect\|^2-1\right),\]
where $\Avect=-\Mmat_L^{-1}\Pvect_L$ and $\Bvect = -\Mmat_L^{-1}\Qvect_L$. It is easy to show that:
{\small\begin{eqnarray*}
 \nonumber\nabla\Delta &=& 2\left(\left<\Avect,\Bvect-\Mvect_1\right>\left(\Jmat^\tr_{\Avect} (\Bvect-\Mvect_1) +
\Jmat^\tr_{\Bvect}\Avect\right)\right. -\\
 &-& \left.(\|\Avect\|^2-1)\Jmat^\tr_{\Bvect}(\Bvect-\Mvect_1)-
\|\Bvect-\Mvect_1\|^2\Jmat^\tr_{\Avect}\Avect\right) 
% \label{eq:grad_delta}
\end{eqnarray*}}
where $\Jmat_{\Avect} = -2\nu\Mmat_L^{-1}$ and $\Jmat_{\Bvect}=-2\nu^2\Mmat_L^{-1}\,\diag(\hat{\tvect})$.
We can also compute the Hessian of $\Delta$:
{\small\begin{eqnarray*}
 \nonumber\Hmat\Delta &=& 2\left(\left(\Jmat_{\Avect}^\tr(\Bvect-\Mvect_1) + \Jmat_{\Bvect}^\tr\Avect\right)
\left(\Jmat_{\Avect}^\tr(\Bvect-\Mvect_1) + \Jmat_{\Bvect}^\tr\Avect\right)^\tr + \right. \\
 \nonumber&+& \left<\Avect,\Bvect-\Mvect_1\right>\left(\Jmat_{\Avect}^\tr\Jmat_{\Bvect} + \Dmat +
\Jmat_{\Bvect}^\tr\Jmat_{\Avect}\right) - \\
 \nonumber&-&\left[ 2(\Jmat_{\Bvect}^\tr(\Bvect-\Mvect_1))(\Jmat_{\Avect}^\tr\Avect)^\tr + (\|\Avect\|^2-1)(\Emat +
\Jmat_{\Bvect}^\tr\Jmat_{\Bvect})\right. +  \\
 &+& \left.\left.2(\Jmat_{\Avect}^\tr\Avect)(\Jmat_{\Bvect}^\tr(\Bvect-\Mvect_1))^\tr +
\|\Bvect-\Mvect_1\|^2\Jmat_{\Avect}^\tr\Jmat_{\Avect}\right]\right)
% \label{eq:tde2	ssl:hess_delta}
\end{eqnarray*}}
where $\Dmat = -2\nu^2\,\diag(\Mmat_L^{-1}\Avect)$ and $\Emat = -2\nu^2\,\diag(\Mmat_L^{-1}(\Bvect-\Mvect_1))$.

Similarly, we can compute the derivatives of the equality constraint ${\cal E}$.

\section{The Derivatives of the Multilateration Cost Function}
\label{sec:derivatives_H}
In this Section we provide the first and second derivatives of the cost function used by the methods \textit{n-mult},
\textit{t-mult} and \textit{f-mult}. Denoted by $H$, the cost function has the
following expression:
\begin{equation}
 H(\Svect) = \sum_{1\leq m<n\leq M}\left(h_{m,n}(\Svect)\right)^2,
\end{equation}
where $h_{m,n}$ are the equivalent of \refeq{eq:normalized_hyperboloid} with foci $\Mvect_m$ and $\Mvect_n$ and
differential value $\hat{t}_{m,n}$. In all, $h_{m,n}$ takes the following expression:
\begin{eqnarray*}h_{m,n}(\Svect) &=& q_{m,n}^2+4\left<\Svect,\Mvect_n-\Mvect_m\right>^2-\\&
&-4q_{m,n}\left<\Svect,\Mvect_n-\Mvect_m\right>-p_{m,n}^2\|\Svect-\Mvect_n\|^2.\end{eqnarray*}
The gradient of $H$ writes:
\begin{equation*}
 \nabla H = 2\sum_{1\leq m<n\leq M} h_{m,n}\nabla h_{m,n},
\end{equation*}
where
\begin{eqnarray*}
\nabla h_{m,n} &=& 8\left<\Svect,\Mvect_n-\Mvect_m\right>\left(\Mvect_n-\Mvect_m\right)-\\&
&-4q_{m,n}\left(\Mvect_n-\Mvect_m\right)-2p_{m,n}^2\left(\Svect-\Mvect_n\right).
\end{eqnarray*}
Similarly, the Hessian of $H$ can be computed as:
\begin{equation*}
 \Hmat H = 2\sum_{1\leq m<n\leq M} \nabla h_{m,n}\left(\nabla h_{m,n}\right)^\tr + h_{m,n}\Hmat h_{m,n},
\end{equation*}
where the Hessian of $h_{m,n}$ has the following expression:
\begin{eqnarray*}
\Hmat h_{m,n} &=& 8\left(\Mvect_n-\Mvect_m\right)\left(\Mvect_n-\Mvect_m\right)\tr-2p_{m,n}^2\Imat_N.
\end{eqnarray*}

\bibliographystyle{plain}
\bibliography{gtde}

% \begin{IEEEbiography}{Xavier Alameda-Pineda}
% Biography text here.
% \end{IEEEbiography}
% 
% % if you will not have a photo at all:
% \begin{IEEEbiography}{Radu Horaud}
% Biography text here.
% \end{IEEEbiography}

\end{document}